\journal{Physica~D}
\newcolumntype{d}{D{.}{.}{-1}}
\newtheorem{theorem}{Theorem}
\theoremstyle{definition}
\newtheorem{definition}{Definition}
\begin{document}

\begin{frontmatter}

\title{Exact solutions of nonlinear dynamical equations for large-amplitude atomic vibrations in arbitrary monoatomic chains with fixed ends}
\author{George Chechin}
\ead{gchechin@gmail.com}
\author{Denis Ryabov}
\address{Institute of Physics, Southern Federal University, 194 Stachki ave., Rostov-on-Don 344090, Russia}


\begin{abstract}
Intermode interactions in one-dimensional nonlinear periodic structures have been studied by many authors, starting with the classical work by Fermi, Pasta, and Ulam (FPU) in the middle of the last century.
However, symmetry selection rules for the energy transfer between nonlinear vibrational modes of different symmetry, which lead to the possibility of excitation of some bushes of such modes, were not revealed.
Each bush determines an exact solution of nonlinear dynamical equations of the considered system.
The collection of modes of a given bush does not change in time, while there is a continuous energy exchange between these modes.
Bushes of nonlinear normal modes (NNMs) are constructed with the aid of group-theoretical methods and therefore they can exist for the case of large amplitude atomic vibrations and for any type of interatomic interactions.
In most publications, bushes of NNMs or similar dynamical objects in one-dimensional systems are investigated under periodic boundary conditions.
In this paper, we present a detailed study of the bushes of NNMs in monoatomic chains for the case of fixed boundary conditions, which sheds light on a series of new properties of the intermode interactions in such systems.
We prove some theorems that justify a method for constructing bushes of NNMs by continuation of conventional normal modes to the case of large atomic oscillations.
Our study was carried out for FPU chains, for the chains with the Lennard-Jones interatomic potential, as well as for the carbon chains (carbynes) in the framework of the density functional theory.
For one-dimensional bushes (Rosenberg nonlinear normal modes), the amplitude-frequency diagrams are presented and the possibility of their modulational instability is briefly discussed.
We also argue in favor of the fact that our methods and main results are valid for any monoatomic chain.
\end{abstract}

\begin{keyword}
anharmonic lattice dynamics \sep nonlinear dynamics \sep group theory \sep one-dimensional structures \sep molecular dynamics \sep DFT calculations
\end{keyword}

\end{frontmatter}

\section{\label{sec1}Introduction}

There is a conventional approach to the study of atomic vibrations in molecular and crystal structures. It is based on the harmonic approximation, which assumes that the atomic displacements from their equilibrium positions are so small that one can take into account only the quadratic terms in the expansion of the potential energy. In this approximation, the concept of normal modes is introduced, which represent independent dynamical objects. If we excite any normal mode, it continues to exist infinitely long because of this independence.

Taking into account sufficiently small anharmonic terms in the potential energy, one can talk about ``interactions'' between different normal modes.

There are a lot of investigations on the influence of such intermode interactions on the properties of various physical systems (for example, see discussion of the phonon-phonon interactions in crystals~\cite{Reissland1973}).

The simplest system for studying intermode interactions was proposed in Ref.~\cite{FPU1955}. Actually, this work marked the beginning of the development of the modern nonlinear dynamics.

In the above-cited paper, the so-called Fermi-Pasta-Ulam (FPU) chains were introduced. The FPU chain represents a one-dimensional system of masses (hereafter ``particles'') connected by weakly nonlinear springs, which perform longitudinal vibrations along the chain. All masses and springs are identical.

The history of this simple mechanical system takes us to the middle of the last century to Los Alamos, where the big lamp computer MANIAC was used for the purposes of the American atomic project. Since dynamical equations describing the FPU chains were impossible to solve by analytical methods, Enrico Fermi proposed to investigate them numerically with the aid of this computer. In its original formulation, the problem was posed as follows. If we excite the normal mode with the longest wavelength at the initial instant and take into account weak nonlinearities of the interatomic interactions, the energy will gradually pass to other normal modes during the system's time evolution. Then, in accordance with the general concepts of statistical physics, for the case of a sufficiently large number of particles, the final state of the chain should be equipartition of the energy of the initial excitation between all normal modes, which can be considered as the system's degrees of freedom. Fermi was interested in the dependence of the time of such a transition to the state of ``thermodynamic equilibrium'' on the value of the coefficient at the quadratic (FPU-$\alpha$ chain) or cubic (FPU-$\beta$ chain) terms, which are the amendments to the Hooke law.

Instead of the expected energy equipartition between all normal modes, the authors of~\cite{FPU1955} revealed an amazing result described by Zabusky in his brilliant review~\cite{Zabusky1981} by the following words:

``Much to their surprise, the system did not `equilibrate' energy among all the $N$-modes of the system, but rather exhibited long-time near-recurrences and energy sharing only among the lowest modes of the system\ldots and finally returned almost completely to mode~1 after one near-recurrence period~$t_R$.''

The $t_R$ turns out to be a sufficiently long time interval, and the energy of the initial excitation returns to the first mode with an accuracy of 2--3\%. Moreover, this accuracy of the energy return to the initially excited mode does not decrease even after many tens of periods~$t_R$. If we rescale the frequencies of the normal modes to match the audio frequencies perceived by the human ear, then instead of the expected cacophony of sounds, corresponding to the equipartition of energy between all modes, the FPU chains performed a certain melody, in which individual modes were periodically solo on each return cycle (the corresponding music notes were even published!).

This result, which came into conflict with the basic principles of classical statistical physics, was called the FPU paradox. During the decade after the classic work of Fermi, Pasta, and Ulam~\cite{FPU1955}, many attempts have been made to resolve this paradox, but some clarity came only after the appearance of the breakthrough paper by Zabusky and Kruskal~\cite{ZabuskyKruskal1965}. With the aid of the continuum approximation, the authors of this pioneering work made a transition from the system of ordinary differential equations, describing dynamics of the FPU-$\alpha$ chain, to a single differential equation in partial derivatives, which turns out to be the Korteweg de Vries (KdV) equation describing ``solitary waves'' of small amplitude in shallow water. It was in this work that the concept of ``soliton'' appeared for the first time, which became one of the most important concepts of nonlinear physics.

If we make a similar transformation of the dynamical equations for the FPU-$\beta$ chain, we get the so-called modified KdV equation (mKdV). Both of these equations, KdV and mKdV, turned out to be ``completely integrable''---the number of conservation laws (first integrals of motion) corresponding to them is equal to infinity.

A remarkable explicit nonlinear transformation between the solutions of the KdV and mKdV equations was found by Miura in Ref.~\cite{Miura1968}. This transformation has the form of the Riccati equation for the unknown solution of the KdV equation, and by the already known transition to the logarithmic derivative of some new function~$\psi$, it can be reduced to the one-dimensional Schr\"odinger equation. Thus, thanks to Miura, a bridge was thrown from the problem of purely classical physics of solitons to some problems of quantum mechanics. As a result, the so-called ``heroic period'' began in the modern theory of solitons, which is colorfully described by Zabusky in Ref.~\cite{Zabusky1981}.

Thus, the study of the dynamics of such a seemingly elementary mechanical system, which is the FPU model, has led to a very great result of modern physics---to the creation of the theory of solitons, which turned out to be ``ubiquitous'' dynamical objects. Moreover, FPU chains have led to a number of other significant discoveries in nonlinear dynamics. These discoveries are associated with dynamical chaos, discrete breathers, bushes of nonlinear normal modes, etc.\ (see the special issue of the journal Chaos dedicated to the fiftieth anniversary of FPU chains).

In this regard, it is interesting to cite the opinion of Zabusky~\cite{Zabusky2005} in which he comments on two our papers about bushes of nonlinear normal modes in the FPU chains: ``In our new investigation, we will examine the work of Chechin \textit{et al.\ }(2002, 2004) \cite{FPU1, FPU2} to determine if the $n$-curve excitations are some kind of discrete or chaotic breather. Did we discover a new royal path in 1967 and miss an opportunity to explore the richness of a new domain of excitations by stepping off too soon?''

Above we have discussed intermode interactions in the case of the small-amplitude atomic vibrations. In the present paper, we consider intermode interactions in the case of \textit{large-amplitude atomic vibrations} with the aid of the theory of bushes of nonlinear normal modes~\cite{DAN1, PhysD98}.


\section{\label{sec2a}Synopsis of the present paper}

\subsection{The problem to be investigated}

We discuss dynamics of the atomic motion in monoatomic chains with fixed ends and $N$~mobile atoms, under the condition of interatomic interactions between nearest neighbors\footnote{The effect of deviation from this condition is partially discussed in Section~\ref{sec7}.}.

We also make the assumption about the Lyapunov stability of motion~\cite{Lyapunov1992}. Physically, this means that small perturbations of the initial conditions (positions of atoms and their velocities) at $t=0$ lead to small perturbation of the trajectory of motion in $N$-dimensional space for any finite time $t>0$ when solving the Cauchy problem for nonlinear Newton's equations.

\subsection{Basis definitions}

We would like to give some definitions of the dynamical objects in nonlinear systems with discrete symmetry, which are considered in this paper, and briefly discuss their properties.

\subsubsection{Vibrational modes}

--- Conventional \textit{normal modes} (NMs), which are considered in all textbooks on classical mechanics (for example, see Ref.~\cite{LandauLifshitz5}). These modes are obtained in the \textit{harmonic approximation} when one can neglect all terms above the second order of smallness in the decomposition of the potential energy into power series for the case of atomic vibrations with small amplitudes. Normal modes can be obtained by the diagonalization of the force constant matrix whose eigenvalues determine square roots of mode frequencies, while eigenvectors determine their atomic \textit{displacement patterns}. Examples of NMs can be seen in tables in Secs.~\ref{sec4} and~\ref{sec5}.

--- \textit{Nonlinear normal modes by Rosenberg} (Rosenberg modes). They represent some generalization of the conventional normal modes to the case of nonlinear dynamical systems.

Rosenberg mode can be written in the form:
\begin{equation}
{}[a_1,a_2,a_3,\ldots,a_n]\,f(t),
\end{equation}
where $a_i$ are constant coefficients, while $f(t)$ is a certain function of time~$t$. Thus, all degrees of freedom possess the \textit{same time dependence}. The function $f(t)$ can be obtained from the so-called governing differential equation. Note that conventional normal modes also satisfy the definition of Rosenberg modes, where $f(t)$ is a sinusoidal function.

It is essential that Rosenberg modes can exist in the very specific systems, for example, in the systems whose potential energy represents homogeneous function of all its variables.

\subsubsection{Bushes of nonlinear normal modes}

Let us take the displacement pattern of some normal mode and zero velocities of all atoms as initial conditions for numerical solving of the Cauchy problem for nonlinear equations of motion. Then, one usually finds that the displacement pattern of the original normal mode does not retain its symmetry with such a continuation of this mode into the region of nonlinearity.

However, it is proved in the bush theory that for discrete nonlinear dynamical systems, the symmetry of the original normal mode is retained and this leads to appearing of a certain \textit{bush of nonlinear normal modes} (NNMs). In the present paper, we explain this process in detail for the case of monoatomic chains with fixed ends.

A given bush of NNMs can be excited by assigning some initial values to amplitudes of all its modes. However, we will often excite the bush by exciting only one of the system`s vibrational modes. In this case, we use the following definition.

\begin{definition}
The mode, which is excited at the initial instant, is called the \textit{root} mode, while the other modes of the bush that are involved into the vibrational process due to the intermode nonlinear interactions are called the \textit{secondary} modes of this bush.
\end{definition}

\subsection{Symmetry of vibrational states of systems with discrete symmetry}

\subsubsection{Classification of vibrational modes by symmetry}

Nonlinear normal modes can be of different physical nature. In the present paper, only vibrational modes are considered, and, therefore, we use the term ``vibrational mode'' as a synonym of the term ``nonlinear normal mode''.

Let us begin with Wigner's theory and the theory of bushes of nonlinear normal modes. Wigner's theory deals with the classification of conventional (linear) normal modes according to the irreducible representations of the symmetry groups of molecules and crystals in their equilibrium states (more generally, according to the symmetry group of their Hamiltonian). This classification is adequate when considering \textit{small atomic vibrations} in systems with discrete symmetry and is \textit{not correct} in the case of \textit{large} vibrations.

In contrast, the theory of bushes of nonlinear normal modes~\cite{DAN1,PhysD98} was developed specifically for the case of large atomic oscillations. In this case, the classification is based on the symmetry subgroups~$G_j$ of the symmetry group~$G_0$ of the system's equilibrium state (see Sec.~\ref{sec6}). Each bush is associated with several irreducible representations of the group~$G_0$.

Let us clarify what kind of symmetry is used in studying bushes of vibrational modes and how this symmetry is related to the symmetry operations considered in the present paper. The group~$G_0$ is one of the point or space symmetry groups. The elements of point groups are rotations, reflections and improper rotations, while translations, the screw axes and glide planes  are added to them for description of space groups. However, due to the very simple structure of the considered systems, we will use only two symmetry operations---a permutation~$\mathcal{P}$ and an inversion~$\mathcal{I}$, whose meaning requires some explanation.

\subsubsection{Structure of normal modes}

Considering tables of normal modes presented in the present paper (see below Tables~\ref{table1}-\ref{table5}), one can notice that their displacement patterns possess inversion or permutation localized in the center of the chain. This center is located on the central atom for an odd number of the atoms~$N$ in the chain and between two central atoms for even~$N$.

For $N=3$, the first two normal modes possess patterns $[\mu,1,\mu]$ and $[1,0,-1]$. By adding zero displacements of fixed atoms, we can rewrite these patterns in the form $[0,\mu,1,\mu,0]$ and $[0,1,0,-1,0]$. Note that in our tables, the displacement patterns of NMs are represented as columns of the corresponding matrix, but now it is more convenient to write them in the form of rows.

Let us now write the state of the infinite one-dimensional crystal corresponding to the first of the above modes, omitting commas between atomic displacements and depicting bar over the negative components (as it is customary in crystallography)
\begin{equation}\label{eq-ast1}
{}[\ldots|\,0\mu1\mu\,|\,0\bar\mu\bar1\bar\mu\,|\,0\mu1\mu\,|\,0\bar\mu\bar1\bar\mu\,|\,0\ldots],
\end{equation}
and the similar representation for the second mode in the case $N=3$:
\begin{equation}\label{eq-ast2}
{}[\ldots|\,010\bar1\,|\,010\bar1\,|\,010\bar1\,|\,010\bar1\,|\,0\ldots].
\end{equation}

Note that the correct ``continuation'' of our chain (with immobile terminal atoms) requires, in the general case, to consider the periodic boundary conditions for a chain with $2(N+1)$ atoms.

It can be seen from (\ref{eq-ast1}) and (\ref{eq-ast2}) that in these states the crystal structure has basis translation by the value $8a$ and $4a$, correspondingly, where $a$ is the distance between the atoms in the crystal's equilibrium state (or the size of the primitive cell). Thus, during the transition of the crystal structure from the equilibrium state to the states with displaced atoms in accordance with (\ref{eq-ast1}), i.e., upon the symmetry transformation $G_0\to G_j$, the translational symmetry decreased by 8 times, and, therefore, the primitive cell is increased by the same times.

Let's give a few more examples of atomic patterns, taking into account the difference of these patterns for different parity of $N$ and for different parity of the number of normal modes in their ordered set.

For the first mode in the case $N=4$, we have the pattern $[\mu\lambda\lambda\mu]$, which produces the following state of one-dimensional crystal after surrounding by zeros of immobile atoms:
\begin{equation}\label{eq-ast3}
{}[\ldots|\,0\mu\lambda\lambda\mu\,|\,0\bar\mu\bar\lambda\bar\lambda\bar\mu\,|\,0\mu\lambda\lambda\mu\,|\,0\bar\mu\bar\lambda\bar\lambda\bar\mu\,|\,0\ldots].
\end{equation}
Similarly, for the second mode with pattern $[\mu\lambda\bar\lambda\bar\mu]$, we have in the same $N=4$ case:
\begin{equation}\label{eq-ast4}
{}[\ldots|\,0\mu\lambda\bar\lambda\bar\mu\,|\,0\mu\lambda\bar\lambda\bar\mu\,|\,0\mu\lambda\bar\lambda\bar\mu\,|\,0\mu\lambda\bar\lambda\bar\mu\,|\,0\ldots].
\end{equation}

For the $N=5$ case, one can obtain the following results for the first two normal modes:
\begin{gather}
{}[\ldots|\,0\mu\lambda1\lambda\mu\,|\,0\bar\mu\bar\lambda\bar1\bar\lambda\bar\mu\,|\,0\mu\lambda1\lambda\mu\,|\,0\bar\mu\bar\lambda\bar1\bar\lambda\bar\mu\,|\,0\ldots], \label{eq-plus1} \\
[\ldots|\,0\mu\mu0\bar\mu\bar\mu\,|\,0\mu\mu0\bar\mu\bar\mu\,|\,0\mu\mu0\bar\mu\bar\mu\,|\,0\mu\mu0\bar\mu\bar\mu\,|\,0\ldots]. \label{eq-plus2}
\end{gather}

\subsubsection{Symmetry of the displacement patterns}

For clarity, we will describe the symmetry of the normal modes' atomic patterns in the form they are given in our tables, without connection with the corresponding states of the one-dimensional crystal, how it is done in the general bush theory. Therefore, we will use only the operations of simple permutation~($\mathcal{P}$) and inversion~($\mathcal{I}$).

It is easy enough to see both of these symmetry operations in (Tables~\ref{table1}-\ref{table5}), which contain normal mode patterns for different~$N$. All odd modes possess permutation~$\mathcal{P}$ at the center of the chain, while all even modes possess inversion~$\mathcal{I}$ at this center.

However, one can see in Table~\ref{table5} for $N=11$ zeros for some \textit{odd} mode patterns, which must correspond to inversions located at positions of zeros. These elements $\mathcal{I}$ certainly belong to the symmetry group of the atomic patterns, as can be seen from the corresponding states of the one-dimensional crystal.

\subsection{Novelty of the paper}

Studying symmetry properties of normal modes, obtained for the case of small atomic oscillations in the harmonic approximation, makes it possible to construct \textit{bushes of nonlinear normal modes} that can exist at \textit{large} amplitudes of atomic oscillations. These bushes are \textit{exact solutions} of the corresponding nonlinear equations of motion despite large anharmonicity of the problem.

\section{\label{sec2}Theory of the bushes of nonlinear normal modes}

In 1930, Wigner's pioneering paper~\cite{Wigner1930} was published, in which group theoretical methods were used for the first time to analyze and classify normal modes in molecules and crystals. This work has become classical and its results are now presented in all textbooks on spectroscopy of such physical objects.

Wigner showed that normal modes can be classified according to the irreducible representations (irreps) of the symmetry group $G_0$ of the considered system in its equilibrium state (in a more general case, according to the irreps of the group of the system's Hamiltonian). The degeneracy of frequency of any normal mode is determined by the dimension of a certain irrep of the group $G_0$, and the basis vectors of this irrep determine the patterns of atomic displacements corresponding to the given mode.

Let us emphasize that these classical results were obtained for the normal modes, i.e., for the case of small atomic vibrations in the harmonic approximation. In this regard, the following question arises: how will Wigner's results change when passing to the case of large-amplitude atomic vibrations? The theory of bushes of nonlinear normal modes (hereafter the bush theory) gives a complete answer to this question~\cite{DAN1, PhysD98}. Since this theory is based on the group-theoretical methods only, it can be used when taking into account nonlinear terms in the potential energy of any magnitude and of any type for any system with discrete symmetry. The bush theory essentially uses the apparatus of irreducible representations of point and space groups and can be applied for studies of the nonlinear dynamics of complex spatial structures. In particular, it was used in a large series of publications~\cite{IJNLM2003, SF62015, CMS2017, Graphene2017, Diamond2018, PSSb2019, PhSS2019, Ryabov2020} devoted to the study of atomic vibrations in different molecular and crystal structures.

In~\cite{DAN1}, exact selection rules were found for the excitation transfer between vibrational modes of different symmetry. If we use these general selection rules for the case of small oscillations in a system with discrete symmetry, for example, for the FPU chain, then it is possible to determine from which normal mode to which the energy can pass in the process of the time evolution of the system. These selection rules generate significant restrictions on the excitation transfer between different modes, which leads to the possibility to excite some sets of a finite number of modes that can exist in the system without transferring their energy to any other vibrational mode. These dynamical objects have been called \textit{bushes} of nonlinear normal modes. The number of modes entering into a given bush represents its \textit{dimension}, which can be any integer from one to the full dimension of the considered system.

Let's note again that the methods for constructing and studying bushes of modes have been developed for arbitrary nonlinear systems with discrete symmetry, and one-dimensional chain is the simplest object for applying the general bush theory.

Considering interactions between normal modes, we should note several works related to the study of bushes of vibrational modes in FPU chains.

In~\cite{PoggiRuffo1997} for the FPU-$\beta$ chain, some ``closed'' sets of normal modes were revealed that persisted during time evolution: the energy wandered only between these modes, not being transferred to other normal modes. This investigation was based on taking into account the specifics of the FPU-$\beta$ interatomic interactions. It was shown in our papers~\cite{FPU1, FPU2} that the above ``closed'' sets of modes are nothing more than the simplest examples of bushes of nonlinear normal modes.

After the aforementioned paper by Poggi and Ruffo, several papers of other authors were published in which intermode interactions were also studied based on the specifics of interatomic interactions in the FPU chains~\cite{Shinohara2002, Shinohara2003, Yoshimura2004, CafarellaLeo2004}. Only in Refs.~\cite{Rink2001, Rink2003} Rink used some group-theoretical methods for constructing the exact dynamical objects consisting of several vibrational modes, however, this work was performed again precisely for one-dimensional FPU-chains.

In most papers devoted to dynamics of the FPU-chains, periodic boundary conditions were used, which make it easier to construct solutions in the form of bushes of vibrational modes. On the other hand, the phenomenon of the energy return to the initially excited mode was discovered in FPU-chains for the case of \textit{fixed} boundary conditions. Therefore, it is interesting to consider the possibility of the existence of bushes of vibrational modes in this case as well.

In the present work, we show that this problem turns out to be more nontrivial than that for the periodic boundary conditions. Some subtle details of intermode interactions can be revealed by considering the dynamics of monoatomic chains with the fixed boundary conditions. It is essential that we investigate the case of strong anharmonisms corresponding to atomic vibrations with large amplitudes.

\section{\label{sec3}The simplest example with two degrees of freedom}

In this section, we introduce some concepts from the theory of bushes of modes and prove a simple theorem on the intermode interactions in a system with two degrees of freedom.

\subsection{Solution of the Cauchy problem for the chain with two mobile atoms}

We consider a monoatomic chain with 4 atoms, two of which are mobile (internal atoms), while the terminal atoms are fixed.

Since only longitudinal vibrations are considered, this system has two degrees of freedom (and, therefore, two normal modes)---the symmetric mode with the displacement pattern $[a,a]$ and the antisymmetric mode with the pattern $[a,-a]$. Thus, the symmetric mode describes in-phase oscillations of two mobile atoms, while the antisymmetric mode describes their antiphase oscillations.

Hereafter, when speaking about the excitation of a given normal mode, we mean the following procedure.

Using the harmonic approximation, we find the eigenvector of the force-constant matrix that determines the pattern of atomic displacements of this mode and assign some value~$A$ to its amplitude. Then we integrate the nonlinear equations describing the dynamics of the considered system with the aid of some numerical method (as a rule, we use one of the Runge-Kutta methods), assuming the initial velocities of all atoms are equal to zero.

However, let us begin with consideration of the solution of two nonlinear differential equations of the FPU chain in the form of a \textit{power expansion} in a small vicinity of the point $t=0$ for the case of interatomic forces from the FPU-$\alpha$ model
\begin{equation}
F(x) = - x + \alpha x^2.
\end{equation}

Choosing the atomic displacement pattern $[a,a]$ of the symmetric normal mode as the initial conditions with zero velocities of both atoms, we obtain the following power expansions for the time dependence of the displacements $x_1(t)$, $x_2(t)$ of the first and second atoms
\begin{subequations}
\begin{align}
x_1(t) &= a + \left(\frac12 a - \frac12 \alpha a^2\right) t^2 + O(t^4), \label{eq3} \\
x_2(t) &= a + \left(\frac12 a + \frac12 \alpha a^2\right) t^2 + O(t^4). \label{eq4}
\end{align}
\end{subequations}
Similarly, for the case of the antisymmetric mode, which corresponds to a pattern of atomic displacements $[a,-a]$, we have
\begin{subequations}
\begin{align}
x_1(t) &= \phantom{+}a + \left(\frac32 a + \frac32 \alpha a^2\right) t^2 + O(t^4),  \label{eq5} \\
x_2(t) &= -a - \left(\frac32 a + \frac32 \alpha a^2\right) t^2 + O(t^4).\label{eq6}
\end{align}
\end{subequations}
Note that all these expansions are in even powers of the variable~$t$, due to the symmetry of the Newton equations with respect to the time reversal.

In these equations, the coefficients at the quadratic terms in the variable~$t$ are directly related to the accelerations of the first and second atoms, and therefore, by virtue of Newton's second law, they determine the forces acting on our two mobile atoms.

Comparison of equations (\ref{eq3}) and (\ref{eq4}) shows that for the symmetric mode the accelerations of both atoms are different, and therefore, the forces acting on these atoms are also different, even for arbitrarily small times near the initial instant $t=0$. This is also easy to understand from a direct examination of the forces acting on these atoms (see Theorem~\ref{theorem1}).

\subsection{Symmetry analysis of intermode interactions}

The pattern $[a,a]$ is invariant under \textit{permutation} of its components (operation~$\mathcal{P}$ relative to the center of this short chain, which is located between two mobile atoms), while the pattern $[a,-a]$ is invariant under \textit{inversion} located at the same center (operation~$\mathcal{I}$).

These normal modes are independent of each other in the harmonic approximation and their time dependencies are described by sinusoidal functions with different frequencies.

If we take into account some small anharmonic terms in the potential energy, a weak interaction between the above discussed modes arises and we can prove a simple theorem:

\begin{theorem}\label{theorem1}
The excitation (energy) cannot transfer from the antisymmetric mode to the symmetric mode, while the excitation transfer from the symmetric mode to the antisymmetric mode can occur already at arbitrarily small times.
\end{theorem}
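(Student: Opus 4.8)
The chain has two mobile atoms with displacements $x_1, x_2$ from equilibrium, with fixed atoms at positions $0$ and $3$. The force law is $F(x) = -x + \alpha x^2$, where $x$ is the relative displacement of two neighbors (the spring elongation). My plan is to write Newton's equations explicitly, identify the symmetric coordinate $s = x_1 + x_2$ and the antisymmetric coordinate $d = x_1 - x_2$, and track how the nonlinear term couples them.

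**What I would do first.**

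The plan is to write the equation of motion for each atom as the difference of the forces from its two springs. The elongations of the three springs are $x_1 - 0 = x_1$, $x_2 - x_1$, and $0 - x_2 = -x_2$. Hence
\begin{align}
\ddot{x}_1 &= F(x_2 - x_1) - F(x_1), \nonumber \\
\ddot{x}_2 &= F(-x_2) - F(x_2 - x_1). \nonumber
\end{align}
Substituting $F(x) = -x + \alpha x^2$ and passing to $s = x_1 + x_2$, $d = x_1 - x_2$, I would obtain decoupled-at-linear-order equations in which the quadratic term generates the cross-coupling. The key structural fact I expect to emerge is that the equation governing $s$ contains a source term proportional to $d^2$ (plus a term in $s$ itself), while the equation governing $d$ contains no term that is purely a function of $s$ — every coupling term that feeds $d$ carries a factor of $d$. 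This asymmetry is the heart of the theorem.

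**The central argument.**

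With that structure in hand, the proof splits into two directions. For the antisymmetric-to-symmetric direction, I would start from initial data on the antisymmetric mode, $d(0) = 2a$, $s(0) = 0$, with zero velocities. Because the $d$-equation is homogeneous in $d$ (no pure-$s$ forcing), I want to argue that $s(t) \equiv 0$ and $d(t)$ solving a single scalar equation is a consistent exact solution of the full system: the antisymmetric mode is an exact Rosenberg-type invariant manifold, so no energy can leak into the symmetric coordinate. This is the rigorous content of ``cannot transfer.'' For the symmetric-to-antisymmetric direction, I would take $s(0) = 2a$, $d(0) = 0$ with zero velocities, and show that the $d^2$-type source in the $s$-equation is accompanied in the $d$-equation by a source term proportional to $s^2$ (or more precisely by the $\alpha$-quadratic cross term) that is \emph{nonzero} at $t = 0$. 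Consequently $\ddot{d}(0) \neq 0$, so $d$ leaves zero immediately, matching the explicit expansions (\ref{eq3})--(\ref{eq4}), where $\ddot{x}_1(0) \neq \ddot{x}_2(0)$ forces the antisymmetric component to be excited at arbitrarily small $t$.

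**The main obstacle.**

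The genuinely delicate step is establishing that $\{d = 0\}$ is \emph{not} an invariant manifold while $\{s = 0\}$ \emph{is}. For the invariant-manifold claim (antisymmetric case), I must verify that setting $s = \dot{s} = 0$ makes the $s$-equation self-consistently satisfied, i.e.\ that the forcing of $s$ vanishes identically when $s \equiv 0$ — this requires the $\alpha$-term contributions to cancel in exactly the right way, which is a consequence of the permutation symmetry $\mathcal{P}$ being respected by the full nonlinear force law. For the symmetric case, conversely, I must confirm that the inversion symmetry $\mathcal{I}$ is \emph{broken} by the quadratic nonlinearity, so that restricting to $d = 0$ leaves a nonvanishing residual force on $d$. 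The cleanest way to pin this down is to exploit the symmetry classification already set up in the paper: $\mathcal{P}$ is preserved by $F$ (the chain with $s$ excited stays $\mathcal{P}$-symmetric, so $d$ cannot be forced), whereas $\mathcal{I}$ is the operation that sends $\alpha x^2 \mapsto \alpha x^2$ but reverses the sign structure of the odd part, so the even quadratic term survives as a source. Making this symmetry-breaking argument precise — rather than relying solely on the brute-force Taylor coefficients — is where I would focus the effort, since it is what generalizes to larger $N$.
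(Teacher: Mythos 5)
Your overall plan---pass to the mode coordinates $s=x_1+x_2$, $d=x_1-x_2$ and decide which of the manifolds $\{s=0\}$, $\{d=0\}$ is invariant under the nonlinear flow---is sound, and it is a genuinely different (more explicit) route than the paper's force-balance-plus-Cauchy-uniqueness argument. However, the central structural fact you assert is backwards, and this is fatal as written: if it were true, it would prove the \emph{opposite} of the theorem. Carrying out your own first step with $\ddot{x}_1=F(x_2-x_1)-F(x_1)$, $\ddot{x}_2=F(-x_2)-F(x_2-x_1)$ and $F(x)=-x+\alpha x^2$ gives
\begin{align*}
\ddot{s} &= F(-x_2)-F(x_1) \;=\; s-\alpha\,s\,d,\\
\ddot{d} &= 2F(x_2-x_1)-F(x_1)-F(-x_2) \;=\; 3d+\tfrac{3}{2}\alpha\,d^{2}-\tfrac{1}{2}\alpha\,s^{2}.
\end{align*}
So it is the $s$-equation that is homogeneous in $s$ (every term carries a factor of $s$), and it is the $d$-equation that carries the pure source $-\tfrac{1}{2}\alpha s^{2}$---exactly the reverse of your ``key structural fact'' that $\ddot s$ has a $d^2$ source while nothing purely in $s$ feeds $d$. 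With the correct structure the theorem follows at once: for antisymmetric data ($s(0)=\dot s(0)=0$, $d(0)=2a$) the $s$-equation is linear in $s$ with coefficient $1-\alpha d(t)$, so uniqueness gives $s\equiv 0$ and no transfer occurs; for symmetric data ($d(0)=0$, $s(0)=2a$) one gets $\ddot d(0)=-2\alpha a^{2}\neq 0$, in agreement with the paper's expansions~(\ref{eq3})--(\ref{eq4}). Note also that your ``central argument'' is internally inconsistent: you invoke homogeneity of the \emph{$d$-equation} in $d$ to conclude $s\equiv 0$ (a non sequitur---that homogeneity would make $\{d=0\}$ invariant, i.e.\ would \emph{forbid} symmetric-to-antisymmetric transfer), and two sentences later you require an $s^2$ source in that same $d$-equation.

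The second error is the symmetry attribution in your final paragraph, which is also swapped. The operation that is an exact symmetry of the full nonlinear dynamics is the inversion $\mathcal{I}$: $x_1\mapsto -x_2$, $x_2\mapsto -x_1$ (the geometric reflection of the chain through its center); it fixes the antisymmetric pattern $[a,-a]$, and uniqueness of the Cauchy problem then confines the solution to $\{s=0\}$---this is precisely the paper's argument for part 1. The plain permutation $\mathcal{P}$: $x_1\leftrightarrow x_2$, which fixes the symmetric pattern $[a,a]$, is \emph{not} a symmetry of the anharmonic equations: invariance would require $F(x_1-x_2)+F(x_2-x_1)=F(x_2)+F(-x_2)$ identically, which forces the even part of $F$ to vanish, i.e.\ holds only for an even interatomic potential (the exception noted in the paper's footnotes). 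That failure of $\mathcal{P}$ is exactly why the symmetric mode excites the antisymmetric one. Your text asserts the opposite (``$\mathcal{P}$ is preserved by $F$\dots so $d$ cannot be forced, whereas $\mathcal{I}$\dots survives as a source''), contradicting both the computation above and the theorem itself. Fix the two swaps---$s\leftrightarrow d$ in the coupling structure and $\mathcal{P}\leftrightarrow\mathcal{I}$ in the symmetry argument---and your decomposition becomes a correct, fully explicit alternative to the paper's proof.
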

\begin{proof}
1. Let us excite the antisymmetric mode with the displacement pattern $[a,-a]$ at the initial instant $t=0$. We can assume without loss of generality that $a>0$. Then both mobile atoms are shifted from their equilibrium positions to the center of the chain by equal distances~$a$ (see Fig.~\ref{fig3}).

\begin{figure}
  \centering
  \includegraphics[width=.5\linewidth]{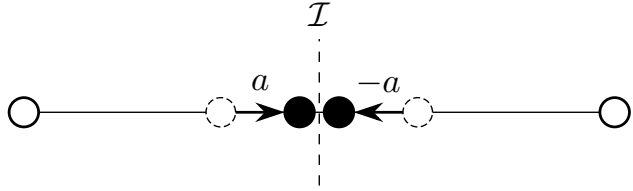}
  \caption{\label{fig3} The antisymmetric mode with the displacement pattern $[a,-a]$.}
\end{figure}

Consequently, the forces acting on them from the terminal atoms are the same in magnitude and opposite in direction. The same can be said about the forces acting between mobile atoms due to the Newton's third law.

Then, it follows from the uniqueness of the Cauchy problem for Newton's nonlinear equations that antisymmetry of the mobile atoms structure survives for $t>0$.

This means that the antisymmetric mode can exist for an arbitrarily long time and for sufficiently large values of the parameter~$a$ of its displacement pattern $[a,-a]$. In turn, this means that the excitation (energy) from it is not transferred to the symmetric mode, because otherwise, during evolution, the initial pattern $[a,-a]$ would turn into a general pattern $[a,b]$ with $a\not\equiv b$.

2. Let us now consider the case when only symmetric mode with pattern $[a,a]$ was excited at the initial instant $t=0$ for solving the Cauchy problem. If $a>0$, then both mobile atoms are displaced by equal distances to the right, i.e., the first mobile atom moves away from the left fixed atom of the chain, while the second atom approaches the right end of the chain.

The interaction forces between these two atoms are equal in magnitude according to Newton's third law, and, therefore, the total forces acting on the first and second mobile atoms are obviously different\footnote{Except of the specific case of an even interatomic potential.}. This fact leads to changing the initial pattern $[a,a]$ of the symmetric mode to the form $[a,b]$, where $a\not\equiv b$.

Thus, the excitation transfer from the symmetric mode to the antisymmetric mode occurs even at arbitrarily small time~$t$.

Indeed, we can introduce in two-dimensional space, corresponding to the vector $\vec r=[a,b]$, a basis of two orthonormal vectors $\frac{1}{\sqrt2}[1,1]$ and $\frac{1}{\sqrt2}[1,-1]$. Expanding $\vec r$ in this basis, we get $\vec r=A_1\frac{1}{\sqrt2}[1,1]+A_2\frac{1}{\sqrt2}[1,-1]$. Thus, vector $\vec r$ can be considered as a \textit{linear combination of two modes}---symmetric and antisymmetric. In other words we can say that, being excited at the initial instant, the symmetric mode involves the antisymmetric mode into the vibrational process even at arbitrarily small times.
\end{proof}

Note that in this very simple case, we encounter for the first time a~certain \textit{two-dimensional bush} consisting of symmetric and antisymmetric modes.
In this case, the symmetric mode plays the role of the \textit{root} mode, and the antisymmetric mode is the \textit{secondary} mode, which is involved into the vibrational process due to nonlinear terms in dynamical equations.

Let us again consider the solution of the FPU nonlinear equations near $t=0$ represented by Eqs.~(\ref{eq3}--\ref{eq6}).

Since positions of mobile atoms, determined by the presence of the inversion center between them, is preserved in time, it is obvious that the coefficients of series~(\ref{eq5}) will be strictly equal in magnitude and opposite in sign to those of series~(\ref{eq6}). In other words, the inversion symmetry of the antisymmetric mode is conserved in time due to the uniqueness of the solution of the Cauchy problem for the system of ordinary differential equations, if we consider its behavior in the small vicinity of the initial instant.

Since inversion symmetry is conserved during the time evolution, the antisymmetric mode can exist as an exact dynamical object for an arbitrarily long time. In fact, it is a~certain Rosenberg mode (see~\cite{Chechin2020review}).

The main result of our consideration of the case $N=2$ is the statement that the symmetric mode involves the antisymmetric mode into the vibrational process, while the antisymmetric mode does not involve the symmetric mode. This conclusion is based on symmetry-related arguments, due to which it should take place for any monoatomic chains and it is valid for any amplitude of the considered modes up to the loss of stability of the antisymmetric mode (see below).

\section{\label{sec4}Symmetry of normal modes}

The normal modes can be found as a result of solving the eigenproblem for the force constant matrix for the considered system, i.e., the matrix of second order partial derivatives of the potential energy in the harmonic approximation. Each eigenvector determines a pattern of atomic displacements, which is characterized by a certain symmetry group. In the case of the monoatomic chain with fixed ends and interactions only between the nearest neighbors, one can obtain an analytical formula for the eigenvectors and eigenvalues of the force constant matrix~$W$. This formula was used already in the first work devoted to the FPU chains~\cite{FPU1955}.

Let $k=1..N$ be the index of the column of the force constant matrix, which determines the number of normal modes, and $n=1..N$ is the index of its row, which determines the position of the atom in the chain. Then we have
\begin{align}
\omega_k &= 2\omega_0\sin\frac{\pi k}{2(N+1)}, \label{eq1} \\
X_{k,n}  &= A_k\sqrt{\frac{2}{N+1}}\sin\frac{\pi kn}{N+1}. \label{eq2}
\end{align}
Here, $\omega_k$ is the cyclic frequency of the $k$th mode, $A_k$ is its amplitude, $X_{k,n}$ is the displacement of the $n$th atom from its equilibrium position in the pattern of the normal mode with the number~$k$, and the frequency $\omega_0$ is determined by the interparticle potential [$\omega_0=1$ for FPU models and $3\sqrt[6]{4B_{LJ}/A_{LJ}^4}$ for the case of the Lennard-Jones potential~(\ref{eq10})].

Using Eq.~(\ref{eq2}), one can analyze the symmetry of the displacement patterns of the normal modes for arbitrary~$N$. Let us consider this problem in more detail.

In the second row of Table~\ref{table1}, we show in ascending order the frequencies of normal modes for the case $N=7$. The corresponding eigenvectors of the matrix~$W$ are shown below in columns of the $7\times 7$ matrix. Similar information for the case of the chain with $N=8$ atoms is shown in Table~\ref{table2}.

\begin{table}
  \centering
  \begin{tabular}{rddddddd}
    \toprule
    Mode & 1     & 2      & 3      & 4      & 5      & 6      & 7      \\
    \cmidrule(l){2-8}
    Frequency & 1.475 & 2.893  & 4.200  & 5.345  & 6.286  & 6.984  & 7.414  \\
    \cmidrule(l){2-8}
    \multirow{7}{*}{\rotatebox{90}{Pattern}} & %
     0.191 & 0.354  & 0.462  & 0.500  & 0.462  & 0.354  & 0.191  \\
    &0.354 & 0.500  & 0.354  & {\bf 0}& -0.354 & -0.500 & -0.354 \\
    &0.462 & 0.354  & -0.191 & -0.500 & -0.191 & 0.354  & 0.462  \\
    &0.500 & {\bf 0}& -0.500 & {\bf 0}& 0.500  & {\bf 0}& -0.500 \\
    &0.462 & -0.354 & -0.191 & 0.500  & -0.191 & -0.354 & 0.462  \\
    &0.353 & -0.500 & 0.354  & {\bf 0}& -0.354 & 0.500  & -0.354 \\
    &0.191 & -0.354 & 0.462  & -0.500 & 0.462  & -0.354 & 0.191  \\
    \bottomrule
  \end{tabular}
  \caption{\label{table1}Patterns of atomic displacements of normal modes for the chain with $N=7$ atoms.}
\end{table}

\begin{table}
  \centering
  \begin{tabular}{D{.}{.}{4}ddddddd}
    \toprule
    1.313 & 2.586  & 3.780  & 4.859  & 5.791  & 6.547  & 7.104  & 7.445  \\
    \midrule
    0.161 & 0.303  & 0.408  & 0.464  & 0.464  & 0.408  & 0.303  & 0.161  \\
    0.303 & 0.464  & 0.408  & 0.161  & -0.161 & -0.408 & -0.464 & -0.303 \\
    0.408 & 0.408  & {\bf 0}& -0.408 & -0.408 & {\bf 0}& 0.408  & 0.408  \\
    0.464 & 0.161  & -0.408 & -0.303 & 0.303  & 0.408  & -0.161 & -0.464 \\
    0.464 & -0.161 & -0.408 & 0.303  & 0.303  & -0.408 & -0.161 & 0.464  \\
    0.408 & -0.408 & {\bf 0}& 0.408  & -0.408 & {\bf 0}& 0.408  & -0.408 \\
    0.303 & -0.464 & 0.408  & -0.161 & -0.161 & 0.408  & -0.464 & 0.303  \\
    0.161 & -0.303 & 0.408  & -0.464 & 0.464  & -0.408 & 0.303  & -0.161 \\
    \bottomrule
  \end{tabular}
  \caption{\label{table2}Patterns of atomic displacements of normal modes for the chain with $N=8$ atoms.}
\end{table}

Let us pay attention to the fact that the character of symmetry of atomic patterns of normal modes differs for the chains with odd and even number of atoms.

One can reveal the following two types of symmetry elements of normal modes: permutation~$\mathcal{P}$ of vector components for ``conjugate'' atoms, i.e., atoms located at an equal distance from the center of the permutation, and inversion~$\mathcal{I}$, which is a permutation of the components of conjugated atoms with the change in their sign. Both of these symmetry elements can be centered either at an atom or in the midpoint between two atoms.

For example, the first column in Table~\ref{table1} is invariant under the action of $\mathcal{P}$ located at the fourth atom of the chain, while the first column from Table~\ref{table2} is invariant under the action of~$\mathcal{P}$, which is centered between the fourth and fifth atoms of the chain. In both these cases, operation~$\mathcal{P}$ is localized in the center of the chain.

The second column in Table~\ref{table1} is invariant under the action of inversion centered at the fourth atom, while the second column in Table~\ref{table2} is invariant with respect to the inversion localized between the 4th and 5th atoms of the chain. Note that if the inversion is localized at a certain atom, then its displacement from the equilibrium position is zero (this atom is immobile during time evolution).

It can be seen from Table~\ref{table1} that all its odd columns are invariant under the operation~$\mathcal{P}$, while all even columns are invariant under the operation~$\mathcal{I}$ localized at the center of the chain. Moreover, the fourth column of this table, which determines the pattern of atomic displacements in the fourth normal mode, has three immobile atoms: in addition to the inversion in the center of the chain, it has two more inversions centered at the second and sixth atoms.

In all odd columns of Table~\ref{table2}, there is the operation~$\mathcal{P}$ localized at the center of the chain (between the fourth and fifth atoms), and in all even columns, there is the operation~$\mathcal{I}$ that is also localized at the center of the chain. In addition, in the third and sixth columns of Table~\ref{table2}, there are operations~$\mathcal{I}$, localized at the third and fifth atoms, whereby these atoms are immobile.

Using the displacement patterns shown in the above tables and zero velocities of all the atoms as initial conditions, we can use numerical methods in order to obtain an approximate continuation of a given normal mode to the case of large amplitudes.

One essential question arises in connection with this process: whether the symmetry of the initial conditions is inherited by dynamical regimes appearing as a result of the integration of nonlinear differential equations? The point is that the existence of bushes of NNMs as some exact solutions of nonlinear equations is dictated by the conservation of the symmetry of the dynamical regimes during their time evolution~\cite{DAN1, PhysD98, Chechin2020review}. We show in the next section that simple permutation symmetry is not inherited, in contrast to the inversion symmetry, which is preserved during the time evolution of the initially excited dynamical regime.

\section{\label{sec5}Conservation of the symmetry of dynamical regimes}

\subsection{Solution of the Cauchy problem for dynamics of the Lennard-Jones chains}

In this section, we consider the dynamics of monoatomic chains with the Lennard-Jones interparticle potential~\cite{LennardJones1931} (L-J chains)
\begin{equation}\label{eq10}
  u(r) = \frac{A_{LJ}}{r^{12}} - \frac{B_{LJ}}{r^6}.
\end{equation}

We assume $A_{LJ}=B_{LJ}=1$, which is not a loss of generality since this can be achieved with the aid of the appropriate scaling of temporal and spatial variables.

The analytical form of power expansions for solutions of the L-J dynamical equations are very complex and, therefore, we present the results obtained by the Runge-Kutta numerical method. In order to find such solutions, one must set some value of the amplitude~$A$ of the initially excited mode. All results presented below have been obtained for the case $A=0.1$.

Let us consider the L-J chain with three mobile atoms
($N=\nolinebreak 3$).
There are three normal modes in this case, whose symmetry elements are localized at the second atom. The modes 1 and 3 possess the permutation symmetry~($\mathcal{P}$), while the symmetry of the second mode is the inversion~($\mathcal{I}$).

The time dependence of the atomic displacements for the first mode ($k=1$) has the following form
\begin{subequations}
\begin{align}
x_1(t) &= 0.0500 -0.398 t^2+O(t^4), \\
x_2(t) &= 0.0707 -0.302 t^2+O(t^4), \\
x_3(t) &= 0.0500 -0.104 t^2+O(t^4).
\end{align}
\end{subequations}

These equations show that accelerations of all the atoms are opposite in sign to their displacements and that the accelerations of the conjugated atoms (relative to the operation~$\mathcal{P}$) turn out to be different, i.e., the $\mathcal{P}$-symmetry is lost already at arbitrarily small times.

The time dependence of atomic displacements for the second mode ($k=2$) has the form
\begin{subequations}
\begin{align}
x_1(t) &= \phantom{+}0.0707 -1.273 t^2+O(t^4), \\
x_2(t) &= \phantom{+}0      +0.0   t^2+O(t^4), \\
x_3(t) &= -0.0707           -1.273 t^2+O(t^4).
\end{align}
\end{subequations}

It follows from these equations that the $\mathcal{I}$-symmetry of the initial conditions is preserved in a small vicinity of $t=0$, which means that it is also preserved globally. Indeed, Newton's dynamical equations do not contain anything else but the accelerations of the atoms and the forces acting on them. Therefore, the motion is uniquely determined by its behavior in the vicinity of $t=0$. In other words, it can be seen from the displacement pattern of mode~2 that the initial arrangement of all the atoms is such that the set of forces acting on them also has $\mathcal{I}$-type of symmetry, which means that the pattern of atomic displacements has this symmetry at any other instant.

The numerical integration of the dynamical equations for the L-J chain confirms the above conclusions: the initial permutation symmetry of modes 1 and 3 is broken even at arbitrarily small times, while the inversion symmetry of the second mode persists during the entire time of our numerical experiments. In other words, the excitation is not transmitted from the second mode to other modes. Since the inversion in this mode is localized on the second atom, it remains immobile.

Let us now consider the case of the L-J chain with four mobile atoms. The displacement patterns of the corresponding four normal modes are shown in Table~\ref{table3}. It follows from this table that for modes 1 and 3 there is permutation~$\mathcal{P}$ localized between the second and third atoms, while for modes 2 and 4 there is inversion~$\mathcal{I}$ also localized between these atoms.

\begin{table}
  \centering
  \begin{tabular}{D{.}{.}{4}ddd}
    \toprule
    2.336 & 4.443  & 6.116  & 7.190  \\
    \midrule
    0.372 & 0.602  & 0.602  & 0.372  \\
    0.602 & 0.372  & -0.372 & -0.602 \\
    0.602 & -0.372 & -0.372 & 0.602  \\
    0.372 & -0.602 & 0.602  & -0.372 \\
    \bottomrule
  \end{tabular}
  \caption{\label{table3}Patterns of atomic displacements for normal modes of L-J chain with 4 mobile atoms.}
\end{table}

As before, we consider the solution of corresponding dynamical equations in a small vicinity of $t=0$, choosing the patterns of atomic displacements of normal modes and zero velocities of all the atoms as the initial conditions.

The following expansions for the time evolution of atomic displacements for mode~2 are obtained
\begin{subequations}
\begin{align}
x_1(t) &= \phantom{+}0.06015 -0.90225 t^2+O(t^4), \\
x_2(t) &= \phantom{+}0.03717 -0.13971 t^2+O(t^4), \\
x_3(t) &= -0.03717 +0.13971 t^2+O(t^4), \\
x_4(t) &= -0.06015 +0.90225 t^2+O(t^4).
\end{align}
\end{subequations}
Similarly, for mode~4 the following expansions are obtained
\begin{subequations}
\begin{align}
x_1(t) &= \phantom{+}0.03717 -0.67213 t^2+O(t^4), \\
x_2(t) &= -0.06015 +3.17220 t^2+O(t^4), \\
x_3(t) &= \phantom{+}0.06015 -3.17220 t^2+O(t^4), \\
x_4(t) &= -0.03717 +0.67213 t^2+O(t^4).
\end{align}
\end{subequations}

It can be seen from these equations that for both modes 2 and 4 the inversion symmetry~$\mathcal{I}$ of initial conditions is preserved in the solution near $t=0$.
On the other hand, for mode~3 we have the following expansions
\begin{subequations}
\begin{align}
x_1(t) &= \phantom{+}0.06015 -1.06280 t^2+O(t^4), \\
x_2(t) &= -0.03717 +0.29330 t^2+O(t^4), \\
x_3(t) &= -0.03717 +1.82352 t^2+O(t^4), \\
x_4(t) &= \phantom{+}0.06015 -2.07288 t^2+O(t^4).
\end{align}
\end{subequations}

Thus, the permutation symmetry~$\mathcal{P}$ located at the center of the chain between the second and third atoms, which took place in the pattern of initial displacements, is lost at arbitrarily small times since the accelerations of the atoms do not possess such symmetry. The same situation of the permutation symmetry loss occurs for mode~1.

\subsection{Symmetry of intermode interactions}

\begin{definition}
The vibrational mode in the monoatomic chain with fixed ends will be called \textit{inversion mode} if its displacement pattern possesses inversion (operation~$\mathcal{I}$) with respect to the center of the chain, while the mode with the permutation at the center of the chain will be called \textit{permutation mode}.
\end{definition}

The displacement pattern of inversion mode can be written as
\begin{equation}
{}[\ldots a, b, c \,|\, -c, -b, -a, \ldots] \quad\text{or}\quad [\ldots a, b \,|\, 0 \,|\, -b, -a, \ldots]
\end{equation}
(depending on the parity of $N$), while that of the permutation mode can be written in the form
\begin{equation}
{}[\ldots a, b, c \,|\, c, b, a, \ldots] \quad\text{or}\quad [\ldots a, b \,|\, c \,|\, b, a, \ldots],
\end{equation}
where $a, b, c, \ldots$ are arbitrary parameters.

We will also say that the displacement pattern of an inversion mode \textit{has inversion structure}, while that of a permutation mode has \textit{permutation structure}.

If we sort the list of normal modes by increasing of their frequencies, then we can find that all even modes are inversion modes, while all odd ones represent permutation modes. In the pattern of an even mode, there is the inversion at the center of the chain, which is located either at an atom (for odd~$N$) or between atoms (for even~$N$). These properties can be easily revealed with the aid of formulas (\ref{eq1}, \ref{eq2}) that are valid for the case of monoatomic chains with fixed ends and interactions between nearest neighbors.

For example, the center of the chain with $N=19$ mobile atoms corresponds to the position of the 10th atom and we find from Eq.~(\ref{eq2}) that the displacement of this atom is equal to $\sin(\pi 10k/20)=\sin(\frac{\pi}{2}k)$. For all \textit{even} modes this displacement is zero, and the inversion structure of the displacement pattern appears in the vicinity of the center of the chain that is also obvious from Eq.~(\ref{eq2}).

Moreover, we can reveal with the aid of Eq.~(\ref{eq2}) the presence of zeros on the positions of atoms 5 and 15, as well as inversion structures in their vicinity.

Let us recall that we use the term \textit{conjugate atoms} for atoms located at the same distance from the center of the chain.

\begin{theorem}
Any inversion mode in the monoatomic chain with arbitrary interactions between nearest neighbors and fixed ends exists for an infinitely long time without transferring excitation to permutation modes\footnote{The excitation transfer from a given inversion mode to other modes of this type is not prohibited and, moreover, such a transfer of excitation to other inversion modes takes place during the formation of the bush with a dimension greater than unity.}. Any permutation mode can transfer excitation to inversion modes even at arbitrarily small times.
\end{theorem}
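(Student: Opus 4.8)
The plan is to recognize the inversion and permutation mode subspaces as the two eigenspaces of a single orthogonal involution and to exploit the equivariance of Newton's equations under it. Writing the displacements as $x_1,\dots,x_N$ with fixed ends $x_0=x_{N+1}=0$, and letting $\phi$ denote the (arbitrary) nearest-neighbour force as a function of bond stretching, the equations of motion read
\begin{equation}
m\ddot x_n=\phi(x_{n+1}-x_n)-\phi(x_n-x_{n-1}),\qquad n=1,\dots,N.
\end{equation}
I would introduce the involution $S\colon(x_1,\dots,x_N)\mapsto(-x_N,\dots,-x_1)$. Its $+1$ eigenspace is exactly the inversion subspace $V_{\mathcal I}=\{x:\,x_n=-x_{N+1-n}\}$ and its $-1$ eigenspace is the permutation subspace $V_{\mathcal P}=\{x:\,x_n=x_{N+1-n}\}$; since $S$ is orthogonal with $S^2=\mathrm{id}$, these subspaces are mutually orthogonal and together span the whole configuration space. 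By formula~(\ref{eq2}) the inversion (even) modes form a basis of $V_{\mathcal I}$ and the permutation (odd) modes a basis of $V_{\mathcal P}$, so \emph{no energy transfer to permutation modes} is equivalent to the statement that the trajectory never leaves $V_{\mathcal I}$.

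For the first assertion I would verify that $S$ is a symmetry of the dynamics: substituting $y=Sx$ into the equations and relabelling indices shows, for \emph{any} $\phi$, that $y$ again satisfies $m\ddot y_n=\phi(y_{n+1}-y_n)-\phi(y_n-y_{n-1})$, and that the fixed-end conditions are preserved ($y_0=-x_{N+1}=0$, $y_{N+1}=-x_0=0$). Hence whenever $x(t)$ is a solution, so is $Sx(t)$. Exciting an inversion mode as a root mode supplies initial data $(x(0),\dot x(0))\in V_{\mathcal I}\times V_{\mathcal I}$ (zero velocity lies in every subspace), so $Sx(\cdot)$ and $x(\cdot)$ carry identical Cauchy data; by the uniqueness assumed in Sec.~\ref{sec2a} they coincide, i.e.\ $x(t)\in V_{\mathcal I}=\mathrm{Fix}(S)$ for all $t$. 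Orthogonality $V_{\mathcal I}\perp V_{\mathcal P}$ then forces every permutation-mode amplitude to remain identically zero, proving that an inversion mode persists forever and transfers no excitation to any permutation mode. This is precisely the $N$-atom generalisation of part~1 of Theorem~\ref{theorem1}.

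For the second assertion I would instead evaluate the initial acceleration of a root permutation mode and show it acquires a nonzero component in $V_{\mathcal I}$. With $x(0)\in V_{\mathcal P}$ and $\dot x(0)=0$, a short computation using $x_k=x_{N+1-k}$ gives, writing $q_n:=x_{n+1}-x_n$,
\begin{equation}
\ddot x_n(0)-\ddot x_{N+1-n}(0)=\tfrac1m\bigl[g(q_n)-g(q_{n-1})\bigr],\qquad g(\xi):=\phi(\xi)+\phi(-\xi),
\end{equation}
so the inversion component of the acceleration vanishes if and only if the even part $g$ of the force takes one common value on all bonds. For a non-even potential $g$ is non-constant, and for a generic permutation pattern the consecutive stretchings $q_0,\dots,q_N$ do not all share the same value of $g$, whence $\ddot x_n(0)\neq\ddot x_{N+1-n}(0)$ for some $n$. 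The time-reversal-even expansion $x(t)=x(0)+\tfrac12\,\ddot x(0)\,t^2+O(t^4)$ then has nonzero projection onto $V_{\mathcal I}$ for every small $t>0$, i.e.\ inversion modes are excited at arbitrarily small times. The main obstacle is exactly this last non-degeneracy step: one must exclude the even-potential case (the footnote exception to Theorem~\ref{theorem1}, where $g\equiv\mathrm{const}$ and the bare reflection $R\colon x\mapsto(x_N,\dots,x_1)$ becomes an extra symmetry that renders $V_{\mathcal P}$ invariant) and confirm that $g$ genuinely separates the stretchings. For the FPU-$\alpha$ force $g(\xi)=2\alpha\xi^2$, and for the Lennard-Jones force the conclusion is immediate, in agreement with the explicit expansions computed above.
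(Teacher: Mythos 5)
Your proposal is correct and takes essentially the same route as the paper: the first part is the paper's own argument that the inversion structure of positions, zero velocities, and hence forces on conjugate atoms is propagated by Newton's equations, which you merely formalize via the involution $S$, its fixed-point set, and uniqueness of the Cauchy problem; the second part is the paper's comparison of the initial forces on conjugate atoms of a permutation pattern. Your identity $\ddot x_n(0)-\ddot x_{N+1-n}(0)=\frac{1}{m}\bigl[g(q_n)-g(q_{n-1})\bigr]$ with $g(\xi)=\phi(\xi)+\phi(-\xi)$ is a sharper, quantitative version of that step, making explicit the even-potential exception that the paper only notes in a footnote.
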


\begin{proof}
In any inversion mode, the displacements of conjugated atoms are equal in magnitude and opposite in direction. Such a displacement structure dictates the same symmetry of the \textit{forces} acting on conjugated atoms, which means that \textit{accelerations} of conjugated atoms also have the same symmetry.

On the other hand, the process of numerically solving nonlinear dynamical equations begins with specifying the initial conditions at $t=0$ determined by the pattern of displacements of a given mode multiplied by a certain amplitude~$A$.

Then the initial positions of conjugated atoms in this case possess inversion structure. On the other hand, initial velocities of all the atoms are equal to zero and, therefore, we can consider that these velocities also possess the inversion structure.

Then it follows from Newton's second law that at any time~$t$ the atomic configuration retains the inversion structure.

Thus, we can say that inversion preserves during the atomic movement of the considered chain. Note that this statement is true under condition that the dynamical process \textit{does not lose the Lyapunov stability} (see below).

Let us now consider the second part of Theorem 2.

In the displacement pattern $[a, \ldots, a]$ of the permutation mode, the first mobile atom moves away by~$a$ (for $a>0$) from the left fixed atom of the chain, while the $N$th atom approaches by the same amount~$a$ the right fixed atom.

Therefore, already at $t=0$, the forces acting on the above mobile atoms are \textit{different}\footnote{Again, except of the specific case of an even interatomic potential.}. As a result, the accelerations of these atoms are also different. This leads to a \textit{change in the symmetry} of the displacement pattern, since it turns into the pattern $[a, \ldots, b]$ with $a\not\equiv b$.

Thus, the initial permutation symmetry of the displacement pattern is lost even at arbitrarily small evolution times~$t$.
\end{proof}

\section{\label{sec6}Bushes of nonlinear normal modes in the L-J chains with fixed ends}

\subsection{Excitation transfer between vibrational modes of different symmetry}

We continue to discuss dynamics of the atomic motion in monoatomic chains with $N$ mobile atoms and fixed ends, under the condition of interatomic interactions only between nearest neighbors. Let us give some definitions and prove a theorem, which can help us to find different bushes of vibrational modes.

\begin{definition}\label{def1}
The set of atomic positions, which correspond to zeros in the pattern of a given vibrational mode, will be called the \textit{Z-structure} of this mode. This structure is not changed in time---all its atoms remain motionless during time evolution.
\end{definition}

\begin{definition}\label{def2}
If the Z-structure $Z_A$ of the mode~$A$ contains all the atomic positions presented in the Z-structure $Z_B$ of the mode~$B$, and some additional atomic positions that are not contained in $Z_A$ (i.e., $Z_A$ is a subset of the set $Z_B$), we will say that $Z_B$ \textit{covers} the structure $Z_A$ of the mode~$A$.
\end{definition}

\begin{theorem}\label{theorem3}
Any mode with the structure $Z_0$ involves into the vibrational process those and only those vibrational modes that have either the identical Z-structures, or Z-structures, which cover the $Z_0$-structure of the original mode.
\end{theorem}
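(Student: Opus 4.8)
The plan is to reduce the statement to the symmetry-preservation mechanism of Theorem~2, applied simultaneously to an inversion localized at \emph{each} atom of the $Z_0$-structure rather than only at the chain centre. The first step is to identify every zero of a displacement pattern with an inversion centre: from~(\ref{eq2}) a mode~$M$ has $c\in Z_M$ exactly when $\sin\frac{\pi k_M c}{N+1}=0$, and since $\sin\frac{\pi k_M(c+j)}{N+1}+\sin\frac{\pi k_M(c-j)}{N+1}=2\sin\frac{\pi k_M c}{N+1}\cos\frac{\pi k_M j}{N+1}$, this vanishing is equivalent to $X_{M,c+j}=-X_{M,c-j}$ for all~$j$, i.e.\ to $M$ being invariant under the inversion about~$c$. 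Hence the condition $Z_0\subseteq Z_M$ of Definition~\ref{def2}, that $Z_M$ be identical to or cover~$Z_0$, is \emph{equivalent} to mode~$M$ being invariant under the inversion localized at every atom of~$Z_0$.

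For the ``only those'' half I would repeat the force argument of Theorem~2 locally at each centre $c\in Z_0$: because the interactions are nearest-neighbour, antisymmetry of positions and velocities about~$c$ at $t=0$ makes the forces on the atoms conjugate to~$c$ equal and opposite, so this inversion---and the immobility of the atom sitting at~$c$ asserted in Definition~\ref{def1}---is conserved for all~$t$. As the root mode is antisymmetric about every $c\in Z_0$, the trajectory stays for all time in the subspace~$S$ of configurations invariant under all of these inversions. By the first step the normal modes lying in~$S$ are precisely those with $Z_0\subseteq Z_M$, and a dimension count shows that their number equals $\dim S$, so they constitute an orthonormal basis of~$S$; consequently any mode violating $Z_0\subseteq Z_M$ retains zero amplitude and is never drawn into the vibrational process.

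For the converse I would use that the immobile atoms of~$Z_0$ cut the chain into independent segments bounded by fixed walls, each being itself a monoatomic chain with fixed ends. The restriction of the root mode to a segment carries a permutation (or inversion) symmetry about the segment's centre which, by Theorem~2, is broken by the nonlinear coupling; since generically no further symmetry survives, the motion explores the whole of~$S$ and every admissible mode is fed, just as the planar decomposition $\vec r=A_1\frac{1}{\sqrt2}[1,1]+A_2\frac{1}{\sqrt2}[1,-1]$ illustrates in Theorem~1. The main obstacle is precisely this completeness step: one must show that \emph{every} admissible mode genuinely acquires nonzero amplitude rather than being skipped through an accidental conservation law or a vanishing coupling coefficient, and must establish the dimension count guaranteeing that the modes with $Z_0\subseteq Z_M$ span~$S$. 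I expect to address the former by examining the leading $t^2$ coefficients of the expansion about $t=0$, as in Eqs.~(\ref{eq3})--(\ref{eq6}), while excluding the degenerate case of an even interatomic potential noted in the footnotes to Theorems~1 and~2.
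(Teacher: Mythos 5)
Your proposal is sound in its core and takes a genuinely different route from the paper. The paper offers no independent proof of Theorem~\ref{theorem3}: it declares it a direct consequence of the general bush theory of Ref.~\cite{PhysD98} --- energy can pass from a mode only to modes whose symmetry group is equal to or higher than its own --- and identifies Z-structures with symmetry groups of the one-dimensional crystal states described in Sec.~\ref{sec2a}. You instead build a self-contained argument tailored to the chain: the identity $\sin\frac{\pi k(c+j)}{N+1}+\sin\frac{\pi k(c-j)}{N+1}=2\sin\frac{\pi kc}{N+1}\cos\frac{\pi kj}{N+1}$ converts ``$c$ is a zero of the pattern~(\ref{eq2})'' into ``the pattern is antisymmetric about~$c$'', so the covering relation $Z_0\subseteq Z_M$ becomes invariance under a family of localized inversions; the force argument of Theorem~2, applied at each such inversion, confines the trajectory to the fixed subspace~$S$ of these inversions; and the dimension count identifies~$S$ with the span of exactly the admissible modes. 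What this buys is an elementary, verifiable proof of the ``only those'' half that needs no external machinery and makes transparent that only symmetry, not the form of the nearest-neighbour potential, enters. What the paper's route buys is generality: the same citation covers arbitrary point and space groups and multi-dimensional irreducible representations, at the price of being a black box here.

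Two points need tightening. First, your ``local'' force argument at a zero $c$ away from the chain centre pairs atoms $c\pm j$, one of which can fall outside the chain; the argument closes only because the zeros of any normal mode, together with the two fixed ends, form an equally spaced set whose spacing divides $N+1$, so the reflection about any $c\in Z_0$ of a fixed or immobile position is again fixed or immobile. Equivalently, these inversions are genuine symmetries of the periodically extended crystal states (\ref{eq-ast1})--(\ref{eq-plus2}), which is precisely why the paper routes the identification through Sec.~\ref{sec2a}; you should state this explicitly rather than leave it implicit in the segments picture. Second, the completeness half --- that every mode with $Z_0\subseteq Z_M$ is \emph{genuinely} excited --- holds only generically: for an even interatomic potential the permutation symmetry is also conserved and the resulting bush is strictly smaller, so some admissible modes are never fed. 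You flag this honestly, and the paper is in the same position: the selection rule quoted from~\cite{PhysD98} gives only the prohibition, and nonvanishing of the coupling is nowhere established beyond genericity (the footnotes to Theorems~1 and~2 concede the even-potential exception). Your $t^2$-expansion plan can verify excitation case by case but will not by itself yield an all-$N$ statement; this limitation should be acknowledged in the final write-up rather than presented as a removable obstacle.
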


This theorem does not require any special proof, because it is a direct consequence of one of the main statements, which are proved in the theory of the bushes of nonlinear normal modes~\cite{PhysD98}. Indeed, it is known from this theory that excitation (energy) from any mode can be transferred only to the modes whose symmetry groups are equal or higher than that of the given mode, but not to the modes with lower symmetry groups.

The fact that Theorem~\ref{theorem3} follows from the above statement of the bush theory is easy to see by considering the states of the one-dimensional crystal corresponding to Z-structures, as was described in Sec.~\ref{sec2a} of the present paper.

\subsection{Bushes of vibrational modes for different~$N$}

Let us continue to discuss the L-J chain with four atoms. We already saw that the inversion symmetry~$\mathcal{I}$ that took place in the initial atomic pattern for modes 2 and 4 is preserved during the time evolution. What happens if only one of these modes is excited at the initial instant?

If we excite mode~2 (root mode) and carry out the numerical integration of nonlinear Newton's equations, we find that this mode involves into the vibrational process the previously unexcited mode~4 (secondary mode) and no other mode. The energy is not transmitted to modes 1 and 3, i.e., they remain ``sleeping'' modes. Let's consider this issue in more detail.

Displacement patterns shown in the columns of Table~\ref{table3} for the case $N=4$ represent the orthonormal set of eigenvectors of the force constant matrix, which forms a certain basis of the four-dimensional space of all atomic displacements. If we stop the numerical integration at some arbitrary moment~$t$, then the vector $\vec X(t)$ representing the set of atomic displacements can be expanded in this basis. If the projection of the vector $\vec X(t)$ onto some basis vector is equal to zero, this means that the corresponding mode is not excited due to the interaction with the initially excited root mode. The following result has been obtained for the initially excited mode~2 at $t=2.0$:
\begin{equation}
\vec X=[0,-0.09398,0,0.00713].
\end{equation}

This means that only two modes turned out to be excited at the moment ${t=2.0}$---the root mode~2, which was excited at the initial instant, and the secondary mode~4, which was excited through its interaction with the root mode during numerical integration of the equations of motion.

Note that in the considered case the atomic displacements, corresponding to the secondary mode~4, is about an order of magnitude smaller than the amplitude of the root mode~2.

In the considered case, we come across the concept of the bushes of vibrational modes. We obtain the two-dimensional bush, consisting of two nonlinear modes 2 and 4. If we excite mode~4 at the initial instant (it will be the root mode in such a case), then during time evolution the mode~2 is involved into the vibrational process (and only this mode), resulting in the same two-dimensional bush, consisting of the modes 2 and 4.

The time evolution of this bush is shown in Fig.~\ref{fig1}. It can be seen from this figure that both modes of the bush exchange energy all the time. It is important to note that both normal modes, from which two nonlinear modes were constructed, possess incommensurate frequencies (4.443, 7.190). Thus, the two-dimensional bush describes a quasiperiodic motion---there are two base frequencies in its Fourier spectrum, as well as their different integer linear combinations, which appear because of the nonlinearity of the dynamical process.

\begin{figure}
  \centering
  \includegraphics[width=.7\linewidth]{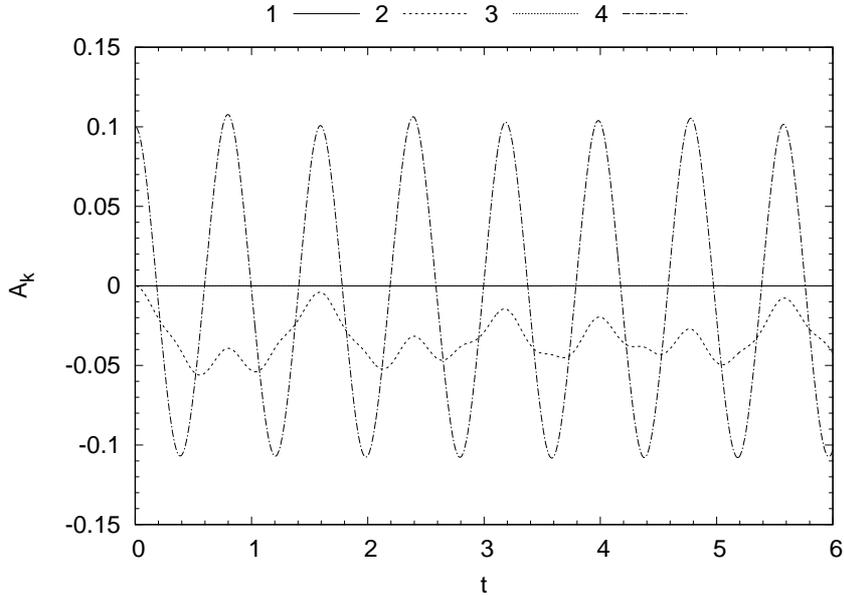}
  \caption{\label{fig1} Time evolution of the two-dimensional bush in the L-J chain with four mobile atoms.}
\end{figure}

It has been proved in the bush theory~\cite{PhysD98}, that the symmetry of the secondary modes should not be lower than that of the root mode, i.e., it must be the same as that of the root mode or higher. In our case, both modes 2 and 4 possess the same symmetry, which is characterized by the inversion~$\mathcal{I}$ located between atoms 2 and 3 of the chain. Because of this, both modes can equally act as root mode for the considered two-dimensional bush.

Let us consider now the case of the L-J chain with $N=11$ atoms. Atomic patterns of all normal modes for this case are given in Table~\ref{table5}.

\begin{table}
  \centering
  \small
  \setlength{\tabcolsep}{0pt}
  \begin{tabular}{D{.}{.}{4}dddddddddd}
    \toprule
    0.987& 1.957 & 2.893 & 3.780 & 4.602 & 5.345 & 5.997 & 6.547 & 6.984 & 7.302 & 7.495  \\
    \midrule
    0.106& 0.204 & 0.289 & 0.354 & 0.394 & 0.408 & 0.394 & 0.354 & 0.289 & 0.204 & 0.106  \\
    0.204& 0.354 & 0.408 & 0.354 & 0.204 &{\bf 0}& -0.204& -0.354& -0.408& -0.354& -0.204 \\
    0.289& 0.408 & 0.289 &{\bf 0}& -0.289& -0.408& -0.289&{\bf 0}& 0.289 & 0.408 & 0.289  \\
    0.354& 0.354 &{\bf 0}& -0.354& -0.354&{\bf 0}& 0.354 & 0.354 &{\bf 0}& -0.354& -0.354 \\
    0.394& 0.204 & -0.289& -0.354& 0.106 & 0.408 & 0.106 & -0.354& -0.289& 0.204 & 0.394  \\
    0.408&{\bf 0}& -0.408&{\bf 0}& 0.408 &{\bf 0}& -0.408&{\bf 0}& 0.408 &{\bf 0}& -0.408 \\
    0.394& -0.204& -0.289& 0.354 & 0.106 & -0.408& 0.106 & 0.354 & -0.289& -0.204& 0.394  \\
    0.354& -0.354&{\bf 0}& 0.354 & -0.354&{\bf 0}& 0.354 & -0.354&{\bf 0}& 0.354 & -0.354 \\
    0.289& -0.408& 0.289 &{\bf 0}& -0.289& 0.408 & -0.289&{\bf 0}& 0.289 & -0.408& 0.289  \\
    0.204& -0.354& 0.408 & -0.354& 0.204 &{\bf 0}& -0.204& 0.354 & -0.408& 0.354 & -0.204 \\
    0.106& -0.204& 0.289 & -0.354& 0.394 & -0.408& 0.394 & -0.354& 0.289 & -0.204& 0.106  \\
    \bottomrule
  \end{tabular}
  \caption{\label{table5}Patterns of atomic displacements of normal modes for the chain with $N=11$ atoms.}
\end{table}

This table shows that on the sixth atom (it is the center of the chain) the patterns of all even modes possess inversion~$\mathcal{I}$, due to which this atom is immobile. All odd modes possess permutation~$\mathcal{P}$ at the center of the chain.

Some modes possess inversions at the other atoms of the chain as well, due to which these atoms are also immobile.

Different symmetries of the displacement patterns of different modes lead to some \textit{selection rules} for the excitation transfer between these modes. It is important to emphasize that these selection rules, being deduced from symmetry-related properties, are valid for vibrations with arbitrary amplitudes and for monoatomic chains with fixed ends for any type of interatomic interactions. Let's consider these issues in more detail.

The maximal number of immobile atoms in the displacement patterns from Table~\ref{table5} corresponds to mode~6 (five zeros in the sixth column). Therefore, it is the mode of higher symmetry among all other modes. According to the bush theory, the excitation (energy) from this mode cannot pass to other modes that possess a lower symmetry. Thus, mode~6 can exist in the system indefinitely long without excitation of any other modes. It represents the one-dimensional bush (nonlinear normal mode by Rosenberg~\cite{Rosenberg1962, Rosenberg1966}).

The amplitude-frequency diagram of mode~6 is presented in Fig.~\ref{fig2}, from which the \textit{hard type} of nonlinearity can be seen (the frequency increases with increasing amplitude).

\begin{figure}
  \centering
  \includegraphics[width=.7\linewidth]{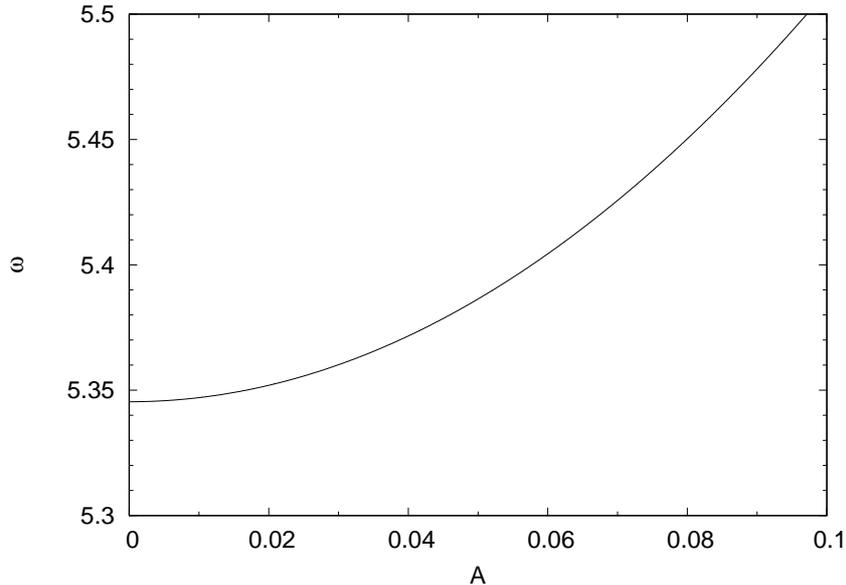}
  \caption{\label{fig2} The amplitude-frequency diagram $\omega(A)$ of the one-dimensional bush (mode~6) in the L-J chain with $N=11$ atoms.}
\end{figure}

The critical amplitude $A_\text{cr}$ of the stability loss (see Ref.~\cite{FPU1}) of mode~6 calculated with the aid of the Floquet method is $A_\text{cr}=0.0869$.

Three immobile atoms 3, 6, and 9 correspond to each of modes 4 and 8. The immobility of these atoms is the consequence of inversions localized at their positions. Both modes 4 and 8 possess the same symmetry, due to which the excitation of only one of them leads to the excitation of the second. The result is the \textit{two-dimensional} bush consisting of modes 4 and 8.

Two immobile atoms 4 and 8 correspond to modes 3 and 9, and these modes possess the same symmetry.

There is also mode~6 with zeros at the \textit{same atomic positions}, but it possesses a higher symmetry since it has additional zeros at atoms 2, 6, and 10. According to the bush theory, mode~6 must be involved into the vibrational process as a secondary mode of higher symmetry. As a result, the initial excitation of mode~3 (or mode~9) leads to the appearance of the \textit{three-dimensional} bush of modes 3, 6, 9 in the considered chain. During time evolution, energy wanders among the modes entering into the given bush, while the total energy of the initial excitation remains localized in this dynamical object.

Modes 2 and 10 possess one immobile atom position~6. When one of them is excited, the \textit{five-dimensional} vibrational bush appears, formed by all even modes (2, 4, 6, 8, and 10). In this case, the secondary modes 4, 6, and 8 possess higher symmetry: three immobile atoms in modes 4 and 8, and five such atoms in mode~6.

Excitation of one of the modes 1, 5, 7, 11 in which there are no immobile atoms (they possess only permutation~$\mathcal{P}$ located at the sixth atom) leads to the appearance of a \textit{general bush}, i.e., bush, which includes all the 11 normal modes.

A similar analysis of the selection rules for the excitation transfer between vibrational modes of different symmetry can be carried out for monoatomic chains with an arbitrary number of atoms. For example, for the cases $N=19$, $N=20$, $N=21$, the following results were obtained: see Tables~\ref{table10}, \ref{table11}, \ref{table12}.

\begin{table}
  \centering
  \begin{tabular}{cccc}
    \toprule
    Bush & Modes of       & Root        & Secondary\\
    dim. & the bush       & modes       & modes\\
    \midrule
    1    & 10             & 10          &  \\
    2    & 8,16           & 8,16        &  \\
    3    & 5,10,15        & 5,15        & 10 \\
    4    & 4,8,12,16      & 4,12        & 8,16 \\
    9    & All even modes & 2,6,14,18   & 4,8,10,12,16 \\
    19   & All modes      & 1,3,7,9,11, & All other modes \\
         &                & 13,17,19    & \\
    \bottomrule
  \end{tabular}
  \caption{\label{table10}Bushes for the chain with $N=19$ atoms.}
\end{table}

\begin{table}
  \centering
  \begin{tabular}{cccc}
    \toprule
    Bush & Modes of       & Root            & Secondary\\
    dim. & the bush       & modes           & modes\\
    \midrule
    1    & 14             & 14              &  \\
    2    & 7,14           & 7               & 14 \\
    3    & 6,12,18        & 6,12,18         &  \\
    6    & 3,6,9,12,15,18 & 3,9,15          & 6,12,18 \\
    10   & All even modes & 2,4,8,10,16,20  & 6,12,14,18 \\
    20   & All modes      & 1,5,11,13,17,19 & All other modes \\
    \bottomrule
  \end{tabular}
  \caption{\label{table11}Bushes for the chain with $N=20$ atoms.}
\end{table}

\begin{table}
  \centering
  \begin{tabular}{cccc}
    \toprule
    Bush & Modes of       & Root          & Secondary\\
    dim. & the bush       & modes         & modes\\
    \midrule
    1    & 11             & 11            &  \\
    5    & 4,8,12,16,20   & 4,8,12,16,20  &  \\
    10   & All even modes & 2,6,10,14,18  & 4,8,12,16,20 \\
    21   & All modes      & 1,3,5,7,9,13, & All other modes \\
         &                & 15,17,19,21   &  \\
    \bottomrule
  \end{tabular}
  \caption{\label{table12}Bushes for the chain with $N=21$ atoms.}
\end{table}

In the above tables, we indicate the dimension of each possible bush in the chain with 19, 20, and 21 atoms, all its modes, root modes (excitation of any of these mode leads to the appearance of the bush), and secondary modes that are involved into the vibrational process due to their interactions with root modes.

It might seem that bushes of modes are only related to atomic vibrations with large amplitudes. This is not the case; bushes are adequate dynamical objects for any values of the atomic amplitudes, in particular, for small ones.

So far, we have excited bushes by initial excitation of the one root mode. However, the same bush can be excited by assigning nonzero amplitudes to all its modes. As an example, in Fig.~\ref{fig100} we show the time evolution of the three-dimensional bush of modes 3, 6, 9 for the L-J chain with $N=11$ mobile atoms. This figure confirms the main thesis that any bush is a unified dynamical object in the sense that energy wanders only between its modes without transferring to other vibrational modes of the system. This thesis remains valid also for the case of small amplitudes.

\begin{figure}
  \centering
  \includegraphics[width=.7\linewidth]{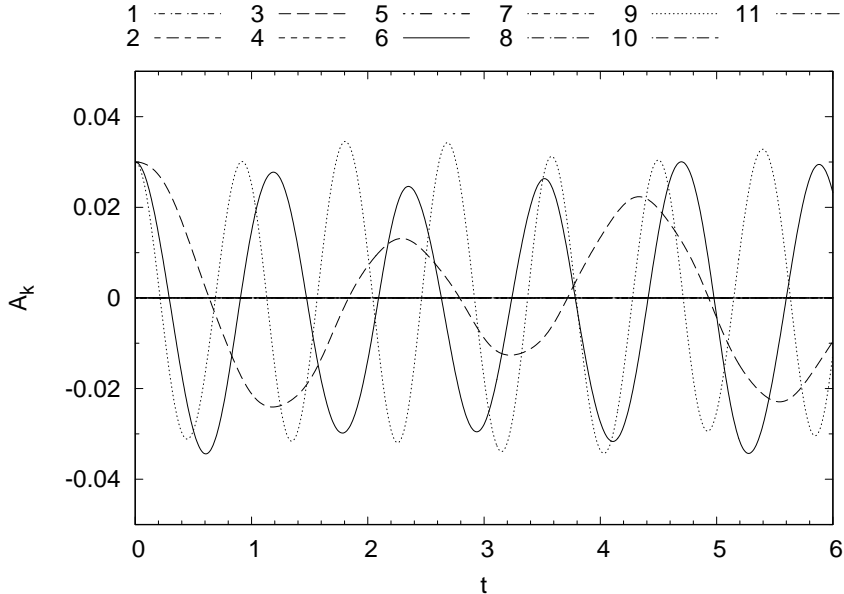}
  \caption{\label{fig100} Time evolution of the three-dimensional bush in the L-J chain with $N=11$ mobile atoms.}
\end{figure}

Note that bushes can be excited in the considered system not only by specifying the initial values of the amplitudes of their modes. They can be also excited by acting on the system with some external time-periodic forces at the resonant frequencies. Methods for the vibrational bushes excitation will be published elsewhere.

\section{\label{sec7}Study of the vibrational nonlinear modes in the framework of the density functional theory}

The previous part of the present work was devoted to the study of nonlinear dynamics of monoatomic chains, whose interatomic interactions are described by certain pair potentials. This approach corresponds to methods of molecular dynamics, in which individual atoms are replaced by material points interacting via some phenomenological potentials, and their motion is described by classical Newton's laws.

From a general theoretical standpoint, this is a rather rough approximation for studying nonlinear dynamics of atomic systems and, therefore, one can ask to what extent the authors' results, obtained by molecular dynamics methods, are valid within the framework of the density functional theory~\cite{Kohn1999} (DFT-methods)?

In this theory, the Born-Oppenheimer approximation is used to study the dynamics of electron-nuclear subsystems: the motion of light electrons is described in the framework of quantum mechanics by solving the Kohn-Sham integro-differential equations, while the motion of heavy nuclei is described by the classical Newton equations.

This computation procedure for studying the dynamics of the atomic system is as follows. The equilibrium state of the electron subsystem for the initially fixed positions of all the nuclei is found. Then one micro step in time is made for the nuclear subsystem using some numerical method for solving Newton's ordinary differential equations, as a result of which a new spatial distribution of the nuclei is obtained. The above computational procedure is repeated until the equilibrium state of the considered electron-nuclear system is reached. It is essential for this approach that each atom is not considered as a material point corresponding to its nucleus, but turns out to have an electron shell and these electronic shells are deformed when the distances between atoms change during their motion.

Since the mathematical apparatus of the density functional theory differs significantly from that of the molecular dynamics, there may be doubts about the adequacy of the theory of bushes of nonlinear normal modes in the framework of the DFT modeling. When discussing this issue, it should be taken into account that the bush theory is based on the group-theoretical methods, which use only the symmetry properties of the considered systems. This is a strong argument in favor of the correctness of this theory for a wide variety of DFT models, which was confirmed in our papers on nonlinear dynamics of different molecular and crystal structures~\cite{PRB2014, SF62015, CMS2017, Graphene2017, Diamond2018}.

In the present work, we use the density functional theory to consider monoatomic chains of carbon atoms, the so-called carbynes~\cite{Casari2016}. Let us explain this choice.

Monoatomic carbon chains can exist in two different forms. These are \textit{cumulene} with double bonds between all atoms [chemical structure \ch{(=C=C=)_{$n$}}] and \textit{polyyne} with the alternation of single and triple bonds [chemical structure \ch{(-C+C-)_{$n$}}]. As a result, all bond lengths are identical in cumulene, while polyyne demonstrates alternation of short and long bonds.

Carbynes possess many unique properties. In particular, they are the strongest material known at the present time, cumulene is a conductor better than linear gold chain and can be considered as the thinnest nanowire. An energy gap arises in the electron spectrum of the cumulene with an appropriate strain, and it changes from the conductor to a semiconductor or dielectric. Direct measurement of the electrical conductivity of the finite carbyne chains was performed in~\cite{Cretu2013}.

It is assumed that the unique carbyne properties will allow using this material in the future for various purposes of nanotechnology. In particular, carbyne chains are interesting because they can connect different fragments of graphene and can be used as the smallest nanowire in metal-matrix nanocomposites. Carbyne is considered as a promising material for spintronics, straintronics, as well as for hydrogen storage in hydrogen technology.

The synthesis of carbyne and the experimental study of its properties present great difficulties~\cite{Casari2016}. So far it was possible to synthesize free carbon chains consisting of only a few dozen atoms. Construction of such chains and studying their mechanical properties was carried out in a number of works using high-resolution transmission electron microscopy (HRTEM). This technology allows one to see and operate with individual carbon atoms, in particular, to determine such characteristics of carbyne chains as the bond lengths. The authors of the paper~\cite{Chuvilin2009} wrote ``The dynamics are observed in real time at atomic resolution with enough sensitivity to detect every carbon atom that remains stable for a sufficient amount of time''.

Free carbon chains of any length must be terminated by molecular complexes to ensure their stability. In principle, they can be obtained with two different terminations: $sp^3$ termination, resulting in carbon atoms linked by alternated single and triple bonds (polyyne) and, thus, with alternating bond lengths, and $sp^2$ termination resulting in double bonds (cumulene). Note that experimentally polyyne is found to be more stable than cumulene.

Since the experimental study of carbynes presents great difficulties, theoretical methods are very important in studying their properties. Most of these methods are based on the density functional theory (DFT)~\cite{HohenbergKohn1964, KohnSham1965, Kohn1999}, implemented in a number of powerful computational packages, such as ABINIT, Quantum Espresso, VASP, and others. Many interesting results were obtained in this way for infinite and finite carbon chains~\cite{Tongay2004, Cahangirov2010, Sorokin2011, Castelli2012, Liu2013, Artyukhov2014, Casillas2014, Timoshevskii2015, Torre2015, Santhibhushan2016, Liu2017, CMS2017}. In Ref.~\cite{Artyukhov2014}, for long-strained carbon chains with an even number of atoms, the Peierls phase transition was predicted above a certain threshold of the strain. As a result of this transition, carbyne transforms from metallic state to insulator state, and one can use this property in nanodevices to control the conductivity of the material by mechanical action.

Properties of sufficiently short carbon chains are discussed in~\cite{Tongay2004, Cahangirov2010, Sorokin2011, Casillas2014, Timoshevskii2015}. In Ref.~\cite{Cahangirov2010} DFT methods allowed detailed investigation of the so-called ``parity effect'' for the strained carbon chains of finite size. It means that the distribution of bond lengths and magnetic moments at atomic sites exhibit even-odd disparity depending on the number of carbon atoms in the chain. The dependence of bond lengths on the type of saturation of carbon chains at their both ends was also studied. Several hydrogen atoms can be attached to their ends to passivate chemically active ends of the chains. If two hydrogen atoms are attached to each end of the chain, then the bond lengths in its middle part correspond approximately to the cumulene structure. If only one hydrogen atom is attached to each end of the chain, then the polyyne structure appears. This problem, as well as the methods for identifying various forms of carbyne by infrared absorption spectra, are discussed in detail in Ref.~\cite{Casari2016}.

In Ref.~\cite{CMS2017}, large-amplitude atomic oscillations in the strained carbon chains were studied with the aid of DFT modeling, and a sharp softening of the $\pi$-mode frequency was found above a certain critical strain value. Condensation of this mode also leads to the Peierls transition discussed in~\cite{Artyukhov2014}. Moreover, the soft mode concept allowed the authors to suggest that there can exist two new forms of carbyne, which differ from the polyyne in the type of alternation of short and long chemical bonds. Actually, this conclusion is based on the theory of bushes of nonlinear normal modes in systems with space or point symmetries group~\cite{DAN1, PhysD98, ChechinZhukov2006} depending on the number of the particles of the chain.

To study the DFT models, the computational package ABINIT~\cite{ABINIT2020} has been used with optimized norm-conserving Vanderbilt pseudopotentials~\cite{Hamann2013, vanSetten2018} and the Perdew-Burke-Ernzerhof (PBE) exchange-correlation functional~\cite{PBE1996}. The kinetic energy cutoff was set to $42$~Hartree, the time step for ions to $20$~atomic units of time, and the tolerance on the difference of forces to $10^{-4}$~Hartree/Bohr.

As it was noted above, the stable state of carbyne is polyyne, which corresponds to the alternation of interatomic bond lengths. However, our analysis of the intermode interactions, carried out in the previous part of this work, corresponds to monoatomic chains with the \textit{same} bond lengths between all atoms, which corresponds to the cumulene state of carbon chains. In order for the system's equilibrium state to be cumulene, some atomic complexes creating $sp^2$ hybridization can be attached to both ends of the chain. We fulfill this condition by attaching oxygen atoms to the ends of the chain to compensate two free chemical bonds of the terminal carbon atoms. We consider these additional oxygen atoms, as well as the terminal carbon atoms, as immobile. Thus, the initial equilibrium state of the chain can be written in the form \ch{O=C=C=C=...=C=C=C=O}.

The excitation of any vibrational bush in such a chain is carried out by setting the pattern of atomic displacements, corresponding to the given root mode with a certain amplitude~$A$, and zero velocities of all atoms as the initial conditions for the DFT descent method.

Let us consider the obtained results.

In Fig.~\ref{fig50}, we present the evolution of the one-dimensional bush generated by the root mode~6 for the carbon chain of $N=11$ mobile atoms.

\begin{figure}
  \centering
  \includegraphics[width=.7\linewidth]{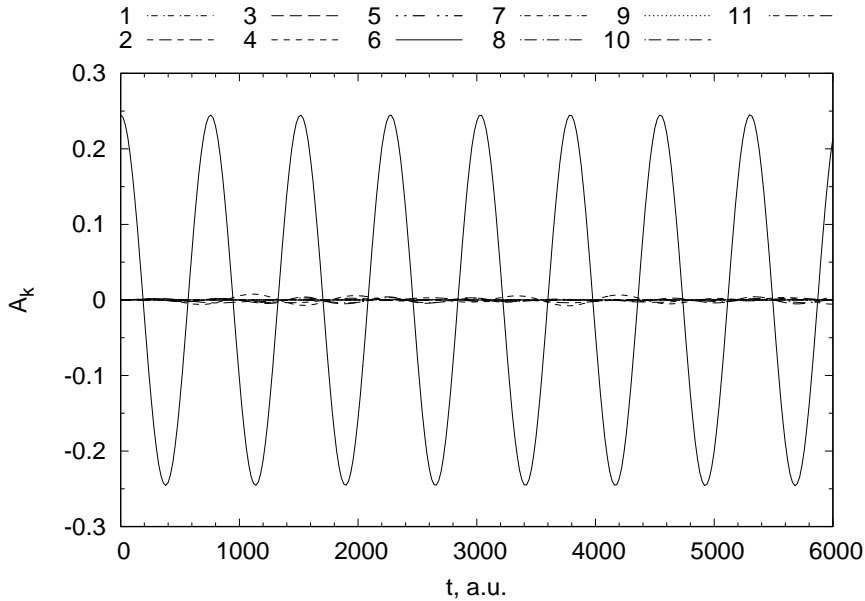}
  \caption{\label{fig50} One-dimensional quasi-bush (mode~6) in the carbon chain of $N=11$ mobile atoms.}
\end{figure}

According to the analysis performed in the previous part of the present work, excitation of this mode should not lead to excitation of other modes since mode~6 has the highest symmetry among all vibrational modes for the considered chain.

However, it can be seen from Fig.~\ref{fig50} that mode~6 exists against the background of other modes that oscillate with very small amplitudes in relation to its own amplitude. This phenomenon is associated with the peculiarities of the boundary conditions that we used in our DFT-modeling.

Pronounced oscillations of the mode~6 against the small background of other modes show that bushes can possess a sufficient degree of stability with respect to different small perturbations.

Vibrational background that is shown in Fig.~\ref{fig50} corresponds to the thermal noise with $T=1.1$~K.

Strictly speaking, we have in Fig.~\ref{fig50} not an exact bush, but some one-dimensional \textit{quasi-bush}. Oscillations of the mode~6 are slightly modulated by a vibrational background of other modes, but this does not destroy the discussed quasi-bush.


The Lennard-Jones model provides a very good approximation to the amplitude-frequency diagram obtained in the framework of the density functional theory. This fact is evident from Fig.~\ref{fig51} where we show points of the amplitude-frequency diagram for mode~6 obtained using DFT calculations for the carbon chain with $N=11$ mobile atoms, which, after the appropriate scaling, were superimposed on the curve of the amplitude-frequency diagram in dimensionless units for the corresponding L-J chain (it was already shown in Fig.~\ref{fig2}). Dimensionless units are shown against the top horizontal and the right vertical axes. Dimensional data obtained as the result of DFT calculations are plotted against the bottom horizontal and the left vertical axes.

\begin{figure}
  \centering
  \includegraphics[width=.7\linewidth]{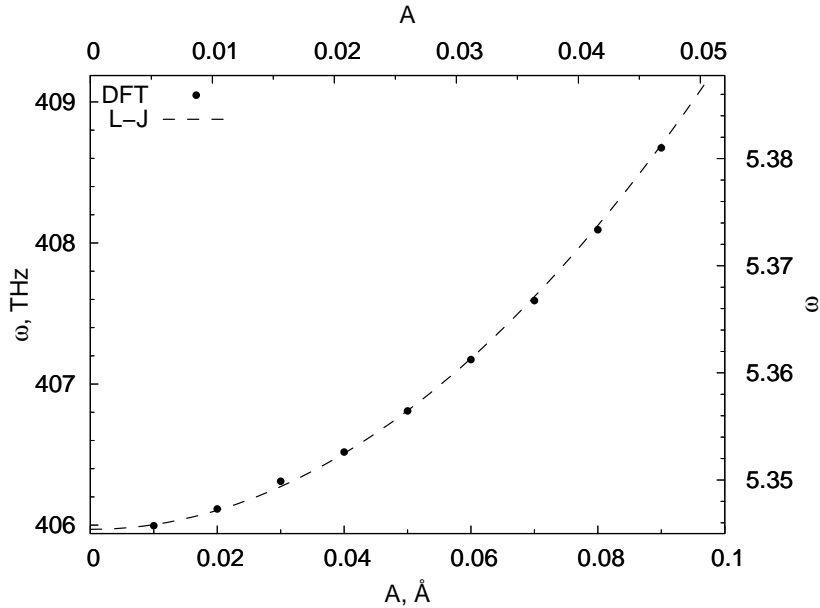}
  \caption{\label{fig51} The amplitude-frequency diagram $\omega(A)$ of the one-dimensional bush (mode~6) in the carbon chain with $N=11$ atoms from DFT-modelling (points; bottom and left axes) in comparison with corresponding diagram of the same bush for the L-J model (dashed line; top and right axes).}
\end{figure}

In Fig.~\ref{fig52}, we demonstrate the time evolution of the two-dimensional quasi-bush formed from modes 4 and 8 for the carbyne chain of $N=11$ mobile atoms. At the initial instant, only the root mode~4 was excited, which involves the secondary mode~8 into the vibrational process. The latter mode has the same symmetry as mode~4 and, therefore, may itself be the root mode of the same bush.

\begin{figure}
  \centering
  \includegraphics[width=.7\linewidth]{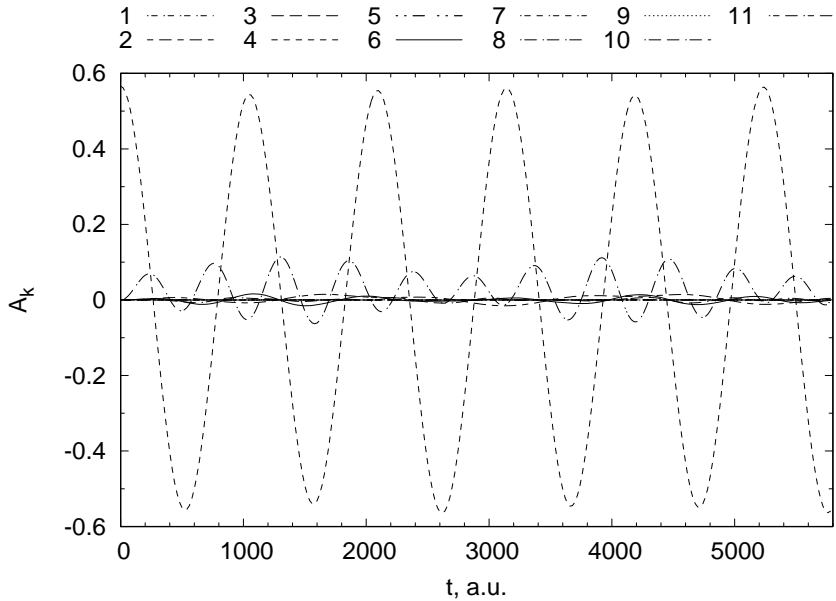}
  \caption{\label{fig52} Time evolution of two-dimensional quasi-bush (modes 4 and 8) in the carbon chain with $N=11$ mobile atoms.}
\end{figure}

It can be seen from Fig.~\ref{fig52} that the secondary mode has a smaller amplitude compared to the root mode. This is due to the way in which this bush was excited.

It should be noted that our two-dimensional bush, as well as any other bush as a single dynamical object, can be excited by setting arbitrary amplitudes of all its modes. Certainly, the value of these amplitudes and, therefore, the energy of the initial excitation cannot be arbitrarily large---they should not lead to the loss of stability of the considered bush~\cite{FPU1}.

In conclusion, we would like to note that this work was not aimed at a detailed study of the bushes of nonlinear vibrational modes in carbynes. Our goal was to show that bushes of the considered type for monoatomic chains with fixed boundary conditions can really exist in the framework of the DFT approach.

\section{\label{sec8}Summary}

The traditional approach to the study of nonlinear dynamics of atomic systems starts with the harmonic approximation, after which the influence of different types of weak anharmonisms on the considered phenomenon is investigated. It was in this direction that the classic work~\cite{FPU1955} of Fermi, Pasta, and Ulam (FPU) was carried out, which actually led to the beginning of rapid development of the modern nonlinear dynamics. Within the framework of the above approach, the analysis of intermode interactions in the one-dimensional FPU chains allowed Zabusky and Kruskal to introduce the concept of solitons~\cite{ZabuskyKruskal1965}.

However, in these and many other works devoted to the dynamics of the FPU chains, the \textit{symmetry aspects} of intermode interactions were not investigated.

On the other hand, in our works~\cite{FPU1, FPU2}, \textit{bushes} of nonlinear vibrational modes of different symmetry were found for the case of the FPU chains with periodic boundary conditions. These results were obtained with the aid of the general theory of bushes of nonlinear normal modes developed in~\cite{DAN1, PhysD98}.

In the geometric sense, each bush defines an invariant manifold of the considered system on which some exact solution of the original nonlinear problem lies. In the dynamical sense, every bush represents a set of nonlinear modes that do not change over time, but the amplitudes of these modes change since they exchange energy with each other. The full bush energy is conserved because it is not transferred to the modes that don't belong to the given bush.

In this paper, we present a detailed consideration of the bushes of the vibrational modes in the monoatomic chains with fixed ends. In particular, we have proved some theorems that form the foundation for the methods of their construction.

Selection rules for energy exchange between modes of different symmetry are found. It is proved that these selection rules, based on studies of normal modes, i.e., for the atomic oscillations with small amplitudes, are also valid for the case of \textit{large amplitudes}. Being obtained only from the analysis of the considered systems' symmetry, these selection rules can be applied to the chains with any type of interatomic interactions. We have demonstrated this conclusion for the Fermi-Pasta-Ulam chains and for the Lennard-Jones chains. It is important to emphasize that discussed selection rules are valid for vibrations with \textit{arbitrary} amplitudes, and bushes of nonlinear normal modes form a class of \textit{exact} solutions beyond any perturbation theory.

Our attempt to construct bushes of nonlinear modes for monoatomic chains with the fixed ends within the framework of the density functional theory led to the discovery of dynamical objects that can be called \textit{quasi-bushes}. Each quasi-bush differs from the exact one in that it exists against the background of \textit{small} vibrations of other modes, which don't belong to the given bush. This background can be considered as an analogue to a low-temperature thermal noise. A more detailed study of the bushes of nonlinear modes with the aid of the density functional theory will be published elsewhere.

\section*{Acknowledgments}

The study was financially supported by the Ministry of Science and Higher Education of the Russian Federation [State task in the field of scientific activity, scientific project No.~0852-2020-0032 (BAS0110/20-3-08IF)]. We are sincerely grateful to N.~V.~Ter-Oganessian for useful discussions.

\bibliography{physd2021}

\end{document}